\newcommand{\mc}[1]{\ensuremath{\mathcal{#1}}}
\newcommand{\mb}[1]{\ensuremath{\mathbb{#1}}}
\newcommand{\ra}{\rightarrow}
\newcommand{\N}{\mb{N}}
\newcommand{\Z}{\mb{Z}}
\newcommand{\Q}{\mb{Q}}
\newcommand{\R}{\mb{R}}
\newcommand{\C}{\mb{C}}
\newcommand{\F}{\mb{F}}
\newcommand{\xbinom}{\genfrac{\{}{\}}{0pt}{}}
\newcommand{\qbinom}{\genfrac{[}{]}{0pt}{}}
\theoremstyle{plain}
\newtheorem{mythm}{Theorem}[section]
\newtheorem{mylemma}[mythm]{Lemma}
\theoremstyle{definition}
\newtheorem{mydef}[mythm]{Definition}
\newtheorem{myrmk}[mythm]{Remark}
\begin{document}
\setcounter{page}{0}
\title{On Decoding Cohen-Haeupler-Schulman Tree Codes}
\author{Anand Kumar Narayanan \\
Laboratoire d'Informatique de Paris 6, Sorbonne Universit\'e, UPMC Campus,\\
anand.narayanan@lip6.fr
\and Matthew Weidner \\
Computer Science Department, Carnegie Mellon University,\\
maweidne@andrew.cmu.edu}
\date{\today}
\maketitle

\vspace{-10pt}
\maketitle
\vspace{-10pt}
\tikzset{->-/.style={decoration={
			markings,
			mark=at position .5 with {\arrow{>}}},postaction={decorate}}}
\begin{abstract}
		 Tree codes, introduced by Schulman \cite{sch93,sch96}, are combinatorial structures essential to coding for interactive communication. An infinite family of tree codes with both rate and distance bounded by positive constants is called asymptotically good. Rate being constant is equivalent to the alphabet size being constant. Schulman proved that there are asymptotically good tree code families using the Lovasz local lemma, yet their explicit construction remains an outstanding open problem. In a major breakthrough, Cohen, Haeupler and Schulman \cite{chs} constructed explicit tree code families with constant  distance, but over an alphabet polylogarithmic in the length. Our main result is a randomized polynomial time decoding algorithm for these codes making novel use of the polynomial method. The number of errors corrected scales roughly as the block length to the three-fourths power, falling short of the constant fraction error correction guaranteed by the constant distance. We further present number theoretic variants of Cohen-Haeupler-Schulman codes, all correcting a constant fraction of errors with polylogarithmic alphabet size. Towards efficiently correcting close to a constant fraction of errors, we propose a speculative convex optimization approach inspired by compressed sensing. 
		  
\end{abstract}

\thispagestyle{empty}

\setcounter{page}{1}

\section{Introduction}
\subsection{Tree Codes}
A binary tree code is a rooted complete binary tree (say of depth $n$) with a labelling of the vertices from a finite alphabet $\Sigma$. This provides a natural way to map a binary string of length at most $n$ to a string over $\Sigma$ of the same length. Namely, the binary string determines a path from the root by choosing a sequence of child vertices, and the string of labels along the path is assigned to the binary string. We will use the term length to refer to the depth of tree codes interchangeably; they mean the same quantity. Think of this encoding as an error correcting code with rate $1/\log_2|\Sigma|$. The tree structure enforces an \textit{online} or causality constraint on the encoding: two binary strings that agree on their first $k$ coordinates have encodings that agree at the first $k$ coordinates. This online property enables the use of tree codes in interactive communication \cite{sch93, sch96}. The error tolerance of a tree code is quantified by its distance $\delta$, defined as the largest $\delta \geq 0$ such that the encodings of any two strings differ in at least a $\delta$-fraction of the coordinates after their first disagreement. Rate and distance are competing quantities, and a family of tree codes is deemed asymptotically good if the tree codes in it are of increasing depth and yet the rate and distance are both lower bounded by nonzero constants. The constant rate condition is equivalent to the alphabet size $|\Sigma|$ being upper bounded by a constant. Schulman \cite{sch93,sch96} proved there is an asymptotically good family of tree codes. Explicit construction of an asymptotically good family of tree codes remains an open problem. Such an explicit construction could perhaps be a first step towards efficiently decodable tree codes, leading to their use in practical interactive communication.

\subsection{Cohen-Haeupler-Schulman Tree Codes and Online Uncertainty}
\noindent In a recent breakthrough, for any target distance a constant $\delta \in (0,1/2)$, Cohen, Haeupler and Schulman (CHS) \cite{chs} devised a family of binary tree codes of increasing depth $n$ with alphabet size polylogarithmic in $n$ and distance at least $\delta$. Our main contribution is a randomized polynomial time decoding algorithm for these codes which corrects roughly $\sqrt{n}$ errors. We fall short of the $\delta n$ error correction guarantee that can be achieved when run time is of no concern. But prior to our work, no nontrivial polynomial time algorithms were known. \\

\noindent We begin with an informal sketch of the CHS construction before outlining our decoding approach. Prime ingredients in the CHS construction are certain integer tree codes. They map integer strings to integer strings of the same length as follows.
Think of an integer string $(z_0,z_1,\ldots,z_{n-1})$ as a set of evaluations at integer indices and associate to it its Newton basis representation $(a_0,a_1,\ldots,a_{n-1})$, namely, $$z_i  = \sum_{j=0}^{n-1}a_j\binom{i}{j}, \mbox{ for all } i\in\{0,1,\ldots, n-1\}.$$  Such coefficients $a_j$ exist, are unique, and are in fact integral, as seen by the inversion formula $$a_j = \sum_{i=0}^{n-1} (-1)^{j-i}\binom{j}{i}z_i.$$ The integer tree code maps $$(z_0,z_1,\ldots,z_{n-1}) \longmapsto ((z_0,a_0),(z_1,a_1),\ldots,(z_{n-1},a_{n-1}))$$ where each pair $(z_i,a_i)$ on the right is thought of as being encoded as a single integer. These encodings conform to a curious distance property: encodings of two sequences of the same length differ (after their first disagreement) in at least half the coordinates. To see why, since the mapping is linear over the integers, it suffices to show separation for the encoding of an arbitrary nonzero string $(z_0,z_1,\ldots,z_{n-1})$ from the all zero string. 
Denote by $s$ the first index of disagreement, that is, $s$ is the smallest index with $z_s\neq 0$. The combinatorial Lindstr\"om-Gessel-Viennot Lemma \cite{lin73,GV85} implies the uncertainty principle $$Sparsity(z_s,z_{s+1},\ldots,z_{n-1}) + Sparsity(a_s,a_{s+1},\ldots,a_{n-1}) \geq n-s+1 $$ for the Newton basis transformation, where $Sparsity$ of a string is the number of nonzero elements. This uncertainty immediately implies the desired distance property $$Sparsity((z_s,a_s),(z_{s+1},a_{s+1}),\ldots,(z_{n-1},a_{n-1})) \geq \frac{n-s}{2}.$$ 
It is remarkable that the uncertainty is additive! Contrast this with the more familiar Heisenberg uncertainty (say, in the context of discrete Fourier transforms) which is multiplicative: the product of the sparsity of a string and its Fourier transform is bigger than the dimension.  Further, the Newton basis transformation is online, i.e., $a_i$ does not depend on $\{z_{k} \mid k>i\}$, since the binomial $\binom{i}{k}$ vanishes for $k>i$. The integer tree code encoding map $$(z_0,z_1,\ldots,z_{n-1}) \longmapsto ((z_0,a_0),(z_1,a_1),\ldots,(z_{n-1},a_{n-1}))$$  thus constitutes an online encoding with distance at least $1/2$. These integer tree codes are then carefully wrapped in intricate classical coding theory machinery to bear binary tree codes. The distance of the resulting tree codes are $1/3$. To obtain tree codes with any prescribed distance $\delta <1$, a modified version of the integer tree codes and wrapping machinery suffice, but all the salient features of the scheme are captured in the tree code construction with distance $1/3$. The alphabet size is determined by how quickly the Newton basis coefficients grow and is polylogarithmic in the depth of the tree code. Curiously, to obtain tree code families having constant distance with even smaller alphabets, it suffices to construct a transformation having additive uncertainty with the transform coefficients smaller than Newton basis coefficients. 

\subsection{Decoding by the Polynomial Method}

\noindent On the decoding front, this classical coding theory machinery is easy to unwrap, meaning the task of decoding the binary tree codes reduces to that of decoding the integer tree codes. 
Call the message string $z:=(z_0,z_1,\ldots, z_{n-1})$ of the integer tree code and its Newton basis coefficients $a:=(a_0,a_1,\ldots, a_{n-1})$. The decoder is given an erroneous version $((\hat{z}_0,\hat{a}_0),(\hat{z}_1,\hat{a}_1),\ldots,(\hat{z}_{n-1},\hat{a}_{n-1}))$ of the encoding and has to recover $z$ (or equivalently $a$). Assume the number of errors (that is, the number of pairs with $(\hat{z}_i,\hat{a}_i) \neq (z_i,a_i)$), possibly adversarial, is bounded by $\mathcal{O}(\sqrt{n/\log(n)})$.\\

\noindent  Write the corrupted Newton coefficients as $\hat{a}=a+v$, where $v$ is the (sparse) Newton basis error vector, and write the corrupted evaluations as $\hat{z} = z + u$, where $u$ is the (sparse) evaluation basis error vector.  In light of the linearity of the encoding, we have
\[
\sum_{j=0}^{n-1} v_j\binom{i}{j} = u_i + \left(\sum_{j=0}^{n-1} \hat{a}_j \binom{i}{j} - \hat{z}_i \right), \mbox{ for all } i \in \{0, 1, \dots, n-1\}.
\]
Thus our goal is to interpolate a sparse polynomial in the Newton basis, given an erroneous version of its evaluations. \\

\noindent \textbf{Locating Evaluation Non-errors:} It is convenient to distinguish between errors in the $z_i$ values (call them evaluation errors) and errors in the $a_i$ values (call them Newton errors). The first step in our algorithm is to locate sufficiently many correct evaluations. Specifically, for some $\alpha(n)$, we find a set $I_z$ of $\alpha(n)/2$ indices among the first $\alpha(n)$ indices such that the evaluations are correct at all indices in $I_z$. We locate these indices one at a time, each time using a polynomial time algebraic algorithm relying on the polynomial method. Our algebraic algorithm is a Newton basis analogue of the Sudan \cite{sud97} and Shokrollahi-Wasserman \cite{SW99} algorithms, but with careful randomization in the interpolation step.  \\

\noindent \textbf{Locating Newton Non-errors using Duality:} The next step is to locate sufficiently many correct Newton coefficients.  That is, we want to find a set $I_a$ of $\alpha(n)/2$ indices among the first $\alpha(n)$ indices such that the Newton coefficients are correct at all indices in $I_a$. To this end, we identify a duality for the Newton basis transformation apparent from the inversion formula:
the Newton basis transformation matrix is a certain row and column scaled version of its own inverse. A consequence of this duality is that we can translate the Newton basis non-error location problem into an evaluation basis non-error location problem, solving the latter using the aforementioned polynomial time algorithm. \\ 

\noindent \textbf{Non-error Location to Error Correction:} We are left with two sets, $I_z$ containing $\alpha(n)/2$ correct evaluation indices and $I_a$ containing $\alpha(n)/2$ correct Newton basis error indices among the first $\alpha(n)$ indices. The uncertainty principle from the Lindstr\"om-Gessel-Viennot Lemma ensures that $z_0$ can be extracted from these correct values using linear algebra.  We can then subtract the encoding of $(z_0, 0, \dots, 0)$, delete the first coordinate, and repeat the algorithm to find $z_1$, then $z_2$, etc., yielding the sought string $(z_0,z_1,\ldots,z_{n-1})$ in polynomial time. 

\begin{mythm}\label{thm_intro_integer_decoding}
	There is a function $\epsilon(n) = \Omega(\sqrt{n/\log(n)})$ for which the Las Vegas type randomized algorithm in \S~\ref{section_alg_alg} has the following guarantee: Given an input $((\hat{z}_0,\hat{a}_0),(\hat{z}_1,\hat{a}_1),$ $\ldots,(\hat{z}_{n-1},\hat{a}_{n-1}))$, where all $\hat{z}_i \in [-Z, Z]$ and $\hat{a}_i \in [-Z2^n, Z2^n]$, the algorithm in expected runtime polynomial in $n$ and $\log(Z)$ outputs a $z = (z_0, \dots, z_{n-1}) \in [-Z, Z]^n$ such that the encoding of $z$ and the input have Hamming distance at most $\epsilon(n)$, if such a $z$ exists.
\end{mythm}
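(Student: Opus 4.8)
The plan is to realise the three-step decoder sketched in the introduction: reduce error correction to \emph{non}-error location, supply the two non-error-location subroutines, and glue them together with the additive Lindstr\"om-Gessel-Viennot (LGV) uncertainty principle. The first thing to check is that once we have a set $I_z$ of $\alpha(n)/2$ indices in $\{0,\dots,\alpha(n)-1\}$ with $\hat z_i=z_i$ and a set $I_a$ of $\alpha(n)/2$ indices in the same range with $\hat a_j=a_j$, the value $z_0$ is forced. Since $\binom ij=0$ for $j>i$, the first $\alpha(n)$ coordinates of the true codeword satisfy the linear system ``$z_i=\hat z_i\ (i\in I_z)$ and $a_j=\hat a_j\ (j\in I_a)$'' in the unknowns $a_0,\dots,a_{\alpha(n)-1}$. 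If two solutions $z,z'$ had $z_0\ne z_0'$, then $\zeta:=z-z'$ restricted to these coordinates is nonzero with first disagreement index $s=0$, vanishes on $I_z$, and has Newton coefficients vanishing on $I_a$; the size-$\alpha(n)$ LGV uncertainty principle then forces $Sparsity(\zeta)+Sparsity(\text{its Newton transform})\ge\alpha(n)+1$, contradicting the bounds $\alpha(n)-|I_z|=\alpha(n)/2$ and $\alpha(n)-|I_a|=\alpha(n)/2$ on the two summands. Hence $z_0$ is the common $0$-th evaluation coordinate of the (affine) solution space, read off by linear algebra over $\Q$, with all integers of size $2^{O(n+\log Z)}$ and hence polynomial bit-length.

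Next, locating evaluation non-errors. Put $w_i:=\sum_j\hat a_j\binom ij-\hat z_i$, computable from the input; by the displayed identity of the introduction the Newton-basis polynomial $V(x):=\sum_j v_j\binom xj$ — sparse, with at most $\epsilon(n)$ nonzero Newton coefficients, and of degree below $\alpha(n)$ once we restrict to the first $\alpha(n)$ coordinates — satisfies $V(i)=w_i$ \emph{exactly} at the evaluation non-errors. So the task is to reconstruct a Newton-sparse polynomial from its values at $0,1,2,\dots$ corrupted in a few adversarial places, and I would attack it with a Newton-basis analogue of the Sudan and Shokrollahi-Wasserman method \cite{sud97,SW99}: interpolate a nonzero bivariate $Q(x,y)$ whose $x$-degrees are capped by a weighted-degree budget calibrated to the sparsity of $V$ (not its a-priori degree $\alpha(n)$, which alone would be useless), vanishing at every pair $(i,w_i)$; argue that when $V$ agrees with more than the weighted-degree threshold of the sampled points the composite $Q(x,V(x))$ vanishes identically, so that $V$ is one of the $y$-roots of $Q$ and is recovered by univariate factorisation; then emit $\alpha(n)/2$ indices avoiding $\{i:V(i)\ne w_i\}$. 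The ``careful randomization in the interpolation step'' is the choice of interpolating solution, arranged so that with constant probability the genuine $V$ is the root one extracts and spurious low-degree roots do not interfere; since the output is verifiable against the received word, this gives the Las Vegas guarantee. The indices are produced one at a time, augmenting the vanishing conditions after each.

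For the Newton non-errors I would use duality rather than a new algorithm. The inversion formula $a_j=\sum_i(-1)^{j-i}\binom ji z_i$ exhibits the Newton transform matrix $\big(\binom ij\big)$ as $D\big(\binom ij\big)^{-1}D$ with $D=\mathrm{diag}(1,-1,1,\dots)$; equivalently $(z,a)\mapsto(Da,Dz)$ sends a codeword of the integer tree code to (essentially) a codeword of the same code with the roles of evaluations and Newton coefficients exchanged. Running the evaluation-non-error locator on this dual instance yields the set $I_a$. With $I_z$ and $I_a$ in hand we extract $z_0$ as above, subtract the exactly-known encoding of $(z_0,0,\dots,0)$, delete the first coordinate — a shift identity for the Newton transform shows the residual data is again an instance of the same problem, with no new errors — and repeat, recovering $z_0,z_1,\dots$ in polynomial time; the hypothesised near-codeword certifies at each stage that a valid $z$ still exists.

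The main obstacle is the randomized Newton-basis Sudan-type reconstruction: selecting the number of sample points, the $y$-degree, and the weighted-degree budget so that the number of tolerated corruptions comes out $\Omega(\sqrt{n/\log n})$ — the square root being the usual tension between the dimension of the interpolation space, which must exceed the number of vanishing conditions, and the degree of $Q(x,V(x))$, which must stay below the number of agreements, while the $\log n$ is a technical loss incurred in making this balance compatible with the sparsity constraint — and simultaneously proving that the randomized interpolation singles out $V$ with probability bounded below, so that expected runtime stays polynomial in $n$ and $\log Z$. A secondary but genuine point is threading the error budget through the peeling recursion so that every residual instance, including the short ones near the end, stays within the subroutine's reach, or else can be completed arbitrarily without exceeding the global Hamming bound $\epsilon(n)$.
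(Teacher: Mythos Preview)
Your duality step and the uniqueness argument for extracting $z_0$ from $I_z,I_a$ are both correct and match the paper (the paper phrases the extraction more constructively, exhibiting a specific LGV-nonsingular submatrix to invert, but the content is the same).

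The gap is in the evaluation-non-error locator. If you restrict to the first $\alpha(n)$ coordinates as you suggest, the restricted $V$ does have degree below $\alpha(n)$, but you then have only $\alpha(n)$ evaluation points against a degree-$\alpha(n)$ polynomial, and no Sudan-type weighted-degree argument can separate agreements from disagreements; if you use all $n$ points, $V$ has degree up to $n-1$ and the same obstruction applies. You acknowledge this (``calibrated to the sparsity of $V$, not its a-priori degree\dots which alone would be useless'') but never say how sparsity enters a root-counting argument, and it cannot enter the standard one: the degree of $Q(x,V(x))$ is governed by $\deg V$, not by the Newton sparsity of $V$. You also misidentify the randomization as a choice among interpolating solutions or among $y$-roots. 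The paper's mechanism is different. One takes $Q(x,y)=b(x)+c(x)y$ with $\deg c<\alpha$, but constrains $b$ to be supported in the \emph{Newton basis} on $[0,\beta)\cup J$ for a uniformly random $J\subset[\beta,n)$ of prescribed size; this random support is where the randomization lives. The key technical fact is that $c(x)V(x)$ has Newton sparsity at most $\alpha\epsilon$ (multiplying by a degree-$<\alpha$ polynomial spreads each $\binom{x}{j}$ over at most $\alpha$ consecutive Newton basis vectors), so a Hoeffding-type bound on the overlap of the random $J$ with this small fixed set shows that $g(x):=b(x)+c(x)V(x)$, restricted to Newton indices in $[0,n)$, has Newton sparsity below $n-\delta$ with high probability. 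One then replaces the degree-versus-roots bound by the Newton-basis uncertainty lemma (a polynomial of Newton sparsity $s$ with first nonvanishing evaluation at $c$ has at most $s-1$ roots in $\{c,\dots,n-1\}$): since $g$ vanishes at all uncorrupted points outside the current exclusion set $H$, iterating this forces $g\equiv 0$ on $[0,\beta)$, and any $i_0<\alpha$ with $c(i_0)\neq 0$ is a certified evaluation non-error --- no factoring of $Q$ and no recovery of $V$ is needed. The $\sqrt{n/\log n}$ arises from balancing the concentration deviation $r\sim\sqrt{(\log n)/n}$ against the sparsity budget, not from a Sudan agreement-versus-degree count.
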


\noindent From this decoding algorithm for the integer tree codes developed in \S~\ref{section_alg_alg}, we obtain a decoding algorithm in \S~\ref{alphabet_reduction_section} for the binary tree codes correcting $\Omega\left(n^{3/4}\log^{-1/2}(n)\right)$ errors.

\begin{mythm}\label{thm_intro_binary_decoding}
	There is a Las Vegas type randomized decoding algorithm (see definition \ref{definition_decoding}) for the Cohen-Haeupler-Schulman tree codes of length $n$ correcting $\Omega\left(n^{3/4}\log^{-1/2}(n)\right)$ errors in expected polynomial time.
\end{mythm}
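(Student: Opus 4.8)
\noindent The plan is to derive Theorem~\ref{thm_intro_binary_decoding} from Theorem~\ref{thm_intro_integer_decoding} by ``unwrapping'' the classical coding-theoretic machinery that the CHS construction layers on top of the integer tree codes, and then tracking the error budget through this reduction. The algebraically substantive work has already been done by the integer tree code decoder of \S~\ref{section_alg_alg}; what remains for \S~\ref{alphabet_reduction_section} is the reduction itself together with the accompanying parameter bookkeeping.

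First I would recall how a binary CHS tree code of length $n$ is assembled: at its core lies an integer tree code of some length $m$, with $m = \Theta(\sqrt{n})$ up to polylogarithmic factors, and each of its integer symbols $(z_i,a_i)$ --- an integer of bit length $\Theta(m)$, since the Newton coefficients of a length-$m$ integer string with $\Theta(m)$-bit evaluations are themselves only $\Theta(m)$ bits long --- is in turn encoded, by the classical coding machinery (in effect, some efficiently and uniquely decodable code of positive constant relative distance $\beta$), into a block of $\ell = \Theta(n/m) = \Theta(\sqrt{n})$ output symbols over $\Sigma$, the blocks being laid out so that the online constraint is respected. The decoder's first move is to run the inner decoder on each of the $m$ blocks of the received word, substituting a default symbol (say the integer pair $(0,0)$) on any block where it fails; this produces a received word $((\hat z_0,\hat a_0),\ldots,(\hat z_{m-1},\hat a_{m-1}))$ for the integer tree code. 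The key estimate is an error amplification: the inner decoder errs on a block only if at least $\beta\ell/2$ of that block's coordinates are in error, so if the received word is within Hamming distance $E$ of a binary codeword, the resulting integer received word is within Hamming distance $\lfloor 2E/(\beta\ell)\rfloor$ of the corresponding integer codeword.

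Next I would feed this integer received word to the algorithm of Theorem~\ref{thm_intro_integer_decoding}, which corrects up to $\epsilon(m) = \Omega(\sqrt{m/\log m})$ integer errors. Since the evaluations $z_i$ encode $\Theta(m)$-bit blocks of the binary message, the relevant range parameter is $\log Z = \Theta(m) = \Theta(\sqrt{n})$, so the call runs in expected time polynomial in $m$ and $\log Z$, hence polynomial in $n$. The integer received word has at most $\epsilon(m)$ errors whenever $2E/(\beta\ell) \le \epsilon(m)$, and
\[
\tfrac{\beta}{2}\,\ell\,\epsilon(m) \;=\; \Omega\!\left(\sqrt{n}\cdot\sqrt{\tfrac{\sqrt{n}}{\log n}}\right) \;=\; \Omega\!\left(n^{3/4}\log^{-1/2}(n)\right)
\]
after substituting $m = \Theta(\sqrt{n})$ and $\ell = \Theta(\sqrt{n})$; so for every binary word within this many errors of a codeword, Theorem~\ref{thm_intro_integer_decoding} returns a valid integer message $z$. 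Re-encoding $z$ and comparing with the received binary word certifies correctness, and inverting the bijective part of the CHS wrapping recovers the sought binary message. The Las Vegas property and the expected-polynomial running time are inherited directly, since every step other than the single call to Theorem~\ref{thm_intro_integer_decoding} is deterministic and polynomial.

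The main obstacle is making the phrase ``easy to unwrap'' precise for the \emph{actual} CHS construction rather than the idealized single-level concatenation sketched above. CHS necessarily use a more delicate, hierarchical arrangement of integer tree codes at several length scales --- plain concatenation would violate the online property --- so one must verify that: (i) the inner decodings can still be carried out independently and in polynomial time at every level; (ii) a single integer-symbol error at a given level is genuinely charged $\Omega(\ell)$ distinct output-symbol errors, with no double counting across levels, so that the $\Theta(\ell)$-fold amplification is not eroded; and (iii) the integer symbols recovered level by level compose into a correct decoding of the entire binary path, with the bit-length and range hypotheses of Theorem~\ref{thm_intro_integer_decoding} met at each invocation. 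Confirming that the CHS wrapping is ``transparent'' to the Hamming metric up to the claimed $\Theta(\ell)$ amplification factor is where the genuine work of \S~\ref{alphabet_reduction_section} lies.
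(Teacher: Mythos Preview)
Your high-level plan and the arithmetic both match the paper: reduce to the integer decoder of Theorem~\ref{thm_intro_integer_decoding} applied at block length $\Theta(\sqrt{n})$, so that the correctable fraction $\gamma(\sqrt{n})=\Omega(n^{-1/4}\log^{-1/2}n)$ translates into $\Omega(n^{3/4}\log^{-1/2}n)$ errors. The paper states this as Theorem~\ref{thm_binary_decode} and derives Theorem~\ref{thm_intro_binary_decoding} exactly as you do in your displayed computation.

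Where your proposal diverges is in the model of the wrapper. The actual CHS construction is not a single inner/outer concatenation, and the ``hierarchical'' picture you sketch in the obstacle paragraph is also not quite right. The construction proceeds in four explicit stages: the integer code $TC_\ell$ is first combined with an algebraic-geometry block code to produce a \emph{lagged} tree code $TCLag_\ell$ of lag $16\sqrt{\ell}$; this is tiled into overlapping windows to give $TCLag_\ell^n$; then $O(\log\log n)$ such codes at geometrically spaced scales $\ell_1,\dots,\ell_j$ are written in parallel coordinates to form $TC^n$; a trivial finite tree code handles the residual constant lag. The paper's decoder mirrors these stages. Crucially, to recover the symbol at index $r+1$ given a received prefix of length $q$, one does \emph{not} compose decodings across levels: one selects the single scale $\ell_{i_0}$ for which $16\sqrt{\ell_{i_0}}\le q-r-1\le \ell_{i_0}/2$, decodes the AG inner code on the relevant window at that scale (this is where the constant-factor amplification you call $\beta$ enters, via unique decoding to relative distance $5/12$), and then invokes the integer decoder once on a $TC_{\sqrt{\ell_{i_0}}}$ instance. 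So your concern~(ii) about double counting across levels and your concern~(iii) about composing recovered symbols level by level do not arise; instead the work is in checking that the scale-selection rule and the lagged/windowed structure deliver a well-posed $TC_{\sqrt{\ell}}$ instance with the claimed error fraction. Once you rewrite the unwrapping to follow these four stages rather than a concatenation, the argument goes through with the parameters you already computed.
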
 

\subsection{Decoding by Convex Optimization} 
The integer tree code decoding problem of finding a sparse polynomial (coefficient vector) $v$ in the Newton basis may be recast as a combinatorial optimization problem 
$$\arg\min_{u,v}\left(\left|\left|u\right|\right|_{\ell_0} + \left|\left|v\right|\right|_{\ell_0}\right), \mbox{ such that}\ \hat{z}-u=B(\hat{a}-v),$$ where $B$ denotes the lower triangular matrix $\{\binom{i}{j} \mid i,j\in [0,n)\}$ with binomial coefficients. We suggest relaxing and solving instead the convex optimization problem $$\arg\min_{u,v}\left(\left|\left|u\right|\right|_{\ell_1} + \left|\left|v\right|\right|_{\ell_1}\right), \mbox{ such that}\ \hat{z}-u=B(\hat{a}-v),$$ where the $\ell_0$ norm is replaced with the $\ell_1$ norm. This proposal is inspired by Compressed Sensing, where the recovery of sparse vectors from only a few observations under certain linear transformations is accomplished by convex optimization. The success of the optimization to recover the sparse signal in compressed sensing is guaranteed by a Restricted Isometry Property (RIP) of the linear transformation.\\

\noindent We formulate an analogous Online Restricted Isometry Property (Online RIP) as a sufficient condition for the decoding to succeed in most cases. If the binomial matrix $B$ were to satisfy this property, we would have an algorithmic online uncertainty principle (for $z_0 \neq 0$)
$$Sparsity(z_0,z_1,\ldots,z_{n-1}) + Sparsity(a_0,a_1,\ldots,a_{n-1}) \geq \frac{n+1}{(\log n)^{\mathcal{O}(1)}}.$$
The resulting uncertainty is weaker in that there is a small polylogarithmic loss on the right. However, it is algorithmic: the convex optimization corrects $n/(\log n)^{\mc{O}(1)}$ errors, nearly that guaranteed by the constant distance property. Further, we only need to work with a precision that allows for the convex optimization, perhaps a means for reducing the alphabet size. \\ 

\noindent This leaves open the problem of constructing matrices satisfying Online RIP. Several random matrix ensembles are known to satisfy RIP, but none are online. A thorough investigation of the (randomized or explicit) construction of matrices satisfying Online RIP is hence warranted. We present explicit candidate constructions that are number theoretic twists on the Newton basis transformation matrix.

\subsection{Number theoretic extensions of Cohen-Haeupler-Schulman Codes}  
We extend the Cohen-Haeupler-Schulman framework to obtain binary tree code families with similar properties, namely, positive constant distance and polylogarithmic alphabet size. Again, the main ingredients are integer tree codes wrapped in classical coding theory machinery.  \\

\noindent Our first two constructions look to Gaussian or $q$-binomial coefficients   

\[ \qbinom{r}{s}_q :=  \begin{cases} 
\frac{(q^r-1)(q^{r-1}-1)\ldots(q^{r-s+1}-1)}{(q^s-1)(q^{s-1}-1)\ldots(q-1)} & , r \geq s \\
\ \ \ \ 0 & , r < s 
\end{cases}
\] for nonnegative integers $r,s$. Despite appearances, $\qbinom{r}{s}_q$ is indeed a polynomial in $q$ for $r \geq s$.  The online uncertainty is then derived from the evaluation map $$z_i  = \sum_{j=0}^{n-1}a_j\qbinom{i}{j}_{q} , i \in \{0,1,\ldots,n-1\}$$ 
after substituting for $q$ a carefully chosen complex number on the unit circle.\\

\noindent \textbf{Tree Codes from Cyclotomic Units:} In the first construction, $q$ is chosen to be a primitive $\ell^{th}$ root of unity for a large enough $\ell = \Theta(n^3)$. Curiously, all the nontrivial entries of the matrix of transformation are cyclotomic units. For the transformation to satisfy additive online uncertainty, determinants of certain submatrices of the transformation (which turn out to be algebraic integers) should not vanish. We prove this by calling upon the Lindstr\"om-Gessel-Viennot framework with underlying lattice paths of carefully selected edge weights. These tree code families result in binary tree codes having constant distance with polylogarithmic alphabet size.\\ 

\noindent \textbf{Tree Codes from Sunflowers:} The second construction is inspired by phyllotaxis, the arrangement of leaves on the stems of certain plants. The same phenomenon arranges sunflower petals. The idea is to substitute for $q$ a transcendental number of the form $e^{2\pi i \alpha}$ where $\alpha$ is an irrational algebraic real number. Integer multiples  $\{i\alpha \mod 1 \mid 0 \leq i \leq t-1\}$ of $\alpha$ modulo $1$ distribute on the unit circle in a pattern determined by the three gap theorem. For any $t$, there are only three gaps between the nearest points, with the sum of the two smaller gaps equalling the third. The smallest gap is maximized (scales as $\Theta(1/t)$) when $\alpha$ is the golden section $(\sqrt{5}-1)/2$, a choice common in plants including sunflower petals! This may be seen as a pseudorandom way of populating the unit circle while still maintaining separation.\\ 

\noindent Again, we look to the Lindstr\"om-Gessel-Viennot framework to prove online uncertainty. In this case, the nonvanishing of the determinants is not difficult to establish, since the determinant is an algebraic expression involving a transcendental $e^{2\pi i \alpha}$. However to construct codes over finite alphabets, we need to represent the transcendental numbers with finite precision approximation. The difficulty now is to show that the determinants do not vanish in small precision. Using effective versions of the Lindemann-Weierstrass theorem from transcendental number theory, we show that a small enough precision suffices. The resulting binary tree code families are of constant distance over polylogarithmic size alphabets. \\  

\noindent \textbf{Tree Codes from the Unit Circle:} We saw that the distribution of points  $\{i\alpha \mod 1 \mid 0 \leq i \leq t-1\}$ on the unit circle is structured,  satisfying the three gap theorem. This structure is an obstruction to these set of points approaching the uniform distribution on the unit circle. Weyl showed that integer square multiples $\{i^2\alpha \mod 1 \mid 0 \leq i \leq t-1\}$ converge to the uniform distribution for any irrational algebraic real $\alpha$. Rudnick, Sarnak and Zaherescu related the convergence to the Diaphantine approximability of $\alpha$. Assuming the ABC conjecture, they proved a strong convergence theorem.  \\

\noindent To construct tree codes from the pseudorandom distribution $\{i^2\alpha \mod 1 \mid 0 \leq i \leq t-1\}$, we move from $q$-binomials to certain multinomials. The online uncertainty is again shown using the Lindstr\"om-Gessel-Viennot framework to accomodate a generalization of $q$-binomials along with an effective Lindemann-Weierstrass theorem.

\subsection{Related Work}
Despite their elegance and applications to interactive communication, relatively few works have studied explicit constructions of tree codes, and even fewer have studied nontrivial decoding algorithms.  Schulman's original proofs \cite{sch93,sch96} that there exist asymptotically good families of binary tree codes are not explicit, relying on the probabilistic method.  Braverman \cite{bra12} gives an asymptotically good construction with subexponential time encoding and decoding algorithms.  Evans, Klugerman and Schulman \cite{sch94} give an explicit construction of binary tree codes with constant distance and alphabet size polynomial in the length; these are easily decoded in polynomial time.  Gelles et~al.\ \cite{gelles_et_al} give efficiently encodable and decodable tree codes with rate approaching one over polynomial size input and alphabets; like our approach, they do not correct a constant fraction of errors, instead only correcting $\Omega(n/\log(n))$ errors.  Unlike the CHS tree codes, their codes' true distance is only known to be $\Omega(1/\log(n))$, not $\Omega(1)$.  Moore and Schulman \cite{ms14} give a candidate construction relying on a conjecture about exponential sums, but nothing is known about decoding these codes.  Pudl\'ak \cite{pud16} gives a construction with large input and output alphabets, using a technique which we reuse in \S \ref{section_cyclotomy}--\ref{section_transcendence}, namely, construct an online matrix depending on some indeterminate, write the determinants of online submatrices of this matrix as polynomials in the indeterminate, then substitute a value which cannot be a root of any of these polynomials.  Again, nothing is known about decoding.

\subsection{Organization}
\S \ref{section_tree_codes} gives background on tree codes and explains the construction of CHS integer tree codes.  \S \ref{alphabet_reduction_section} explains the CHS classical coding theory wrapper yielding binary tree codes from integer tree codes, as well as how to use our decoding algorithm for the integer tree codes to get a polynomial time decoding algorithm for CHS binary tree codes correcting $\Omega\left(n^{3/4}\log^{-1/2}(n)\right)$ errors.  \S \ref{section_alg_alg} presents our algebraic decoding algorithm for the integer tree codes.  \S \ref{convex_decoding_section} discusses a potential alternate approach to tree code decoding relying on convex optimization.  Finally, \S \ref{section_cyclotomy}--\ref{section_transcendence} present new variants of CHS integer tree codes as candidates for convex optimization decoding.  These codes are constructed from q-binomial and multinomial matrices using cyclotomic units and transcendental numbers.

\section{Tree Codes and Online Uncertainty}\label{section_tree_codes}
We formally define tree codes and recount the online uncertainty principle of Cohen-Haeupler-Schulman that is the foundation of their codes.
\subsection{Tree Code Preliminaries}
Recall that binary tree codes are rooted complete binary trees with a labelling of the vertices from a finite alphabet $\Sigma$, encoding binary strings to $\Sigma$ strings with a certain distance property. It is convenient to abandon the tree notation and adopt a functional notation to consider more general tree codes. These general tree codes allow the input strings to be nonbinary, and we will use $\Sigma_1$ and $\Sigma_2$ to distinguish between input and output alphabets. Furthermore, these alphabets could be finite or infinite.  We use the notations $[a] := \{1, 2, \dots, a\}$, $[a, b] := \{a, a + 1, \dots, b\}$, and $[a, b) := \{a, a+1, \dots, b - 1\}$. Strings are indexed starting with $1$, and for a string $x$ and positive integers $i\leq j$, $x_i$ denotes the $i^{th}$ element of the string and $x_{[i,j]}$ the substring from index $i$ through $j$. For two strings $x,y$, let $split(x,y)$ denote the first index where they differ.

\begin{mydef}
	(\textit{Online Functions}) For a positive integer $n$ and alphabet sets $\Sigma_1$ and $\Sigma_2$, a function $f: \Sigma_1^n \longrightarrow \Sigma_2^n$ is said to be online if for all $i \in [n]$ and for all $x \in \Sigma_1^n$, $f(x)_i$ is determined by $x_{[1,i]}$.
\end{mydef}

\begin{mydef}
	(\textit{Tree Codes}) For a positive integer $n$, alphabet sets $\Sigma_1$ and $\Sigma_2$ and distance $0<\delta<1$, a function $TC : \Sigma_1^n \longrightarrow \Sigma_2^n$ is a tree code with distance $\delta$ if 
	\begin{itemize}
		\item $TC$ is online and 
		\item for every distinct $x,y \in \Sigma_1^n$ with split $s:=split(x,y)$ and every $\ell \in [0, n-s]$, $$d_H\left(TC(x)_{[s,s+\ell]},TC(y)_{[s,s+\ell]}\right) \geq \delta(\ell+1)$$ where $d_H$ denotes the Hamming distance. 
	\end{itemize}	 
\end{mydef}

\noindent When $\Sigma_1$ and $\Sigma_2$ are both rings, a tree code $TC:\Sigma_1^n\longrightarrow\Sigma_2^n$ is linear if the function $TC$ is linear.\\

\noindent \textbf{Binary Tree Codes, Integer Tree Codes etc.:} The term ``binary tree codes'' refers to $TC:\Sigma_1^n\longrightarrow\Sigma_2^n$ with $\Sigma_1 =\{0,1\}$. 
Likewise, the term ``integer tree codes'' refers to $TC:\Sigma_1^n\longrightarrow\Sigma_2^n$ with $\Sigma_1 = \Sigma_2 = \Z$. From here on, we will generally use the $TC:\Sigma_1^n\longrightarrow\Sigma_2^n$ notation to prevent ambiguity. There are also versions of tree codes with index set infinite in the literature, say of the form $TC:\Sigma_1^\N\longrightarrow\Sigma_2^\N$, but we refrain from dealing with them for notational convenience.\\

\subsection{Lindstr\"om-Gessel-Viennot Lemma and Newton Basis Uncertainty}
We begin by describing the Cohen-Haeupler-Schulman construction of the integer tree codes $TC_{\Z}$ with distance $1/2$ sketched in the introduction.
Associate to an integer sequence $(z_0,z_1,\ldots,z_{n-1})$\footnote{When dealing with inputs and outputs to $TC_\Z$, we use 0-indexed vectors, to maintain consistency with the notation in the original paper.  Otherwise, we use 1-indexed vectors.} (thought of as an evaluation basis representation) its Newton basis representation $(a_0,a_1,\ldots,a_{n-1})$, defined by the relations $$z_i  = \sum_{j=0}^{n-1}a_j\binom{i}{j}, \mbox{ for all } i\in[0,n).$$ The coefficients $a_j$ are in fact integers, as seen by the inversion formula $$a_j = \sum_{i=0}^{n-1} (-1)^{j-i}\binom{j}{i}z_i.$$ As evident, the transformations between the evaluation and Newton bases in both directions are given by lower triangular integer matrices. The integer tree code $TC_{\Z}$ of length $n$ is defined by
\begin{align*}
TC_{\Z}:\Z^n &\longrightarrow \Z^n \\(z_0,z_1,\ldots,z_{n-1}) &\longmapsto ((z_0,a_0),(z_1,a_1),\ldots,(z_{n-1},a_{n-1})),
\end{align*}
where each pair $(z_i,a_i)$ on the right is encoded as a single integer. Cohen, Haeupler and Schulman used the Lindstr\"om-Gessel-Viennot Lemma to prove that $TC_{\Z}$ has distance $1/2$. While the proof is not important to the description of our decoding algorithms, we take time to sketch it. The proof framework will be critical to the construction of our number theoretic codes. Those interested only in decoding may take for granted Lemma \ref{gessel_viennot} and skip the rest of this subsection. \\

\noindent The Lindstr\"om-Gessel-Viennot Lemma is a famous combinatorial result on determinants of matrices arising from path graphs. There are several excellent expositions including \cite{GV85} and \cite[Chapter 29]{az}. Consider a directed acyclic graph $G=(V,E)$ with edge weights $\{w(e)\mid e\in E\}$ coming from a commutative ring with identity, along with two ordered vertex sets $R=\{r_1,r_2,\ldots,r_d\},C=\{c_1,c_2,\ldots,c_d\} \subseteq V$ of the same cardinality $d$. Associated to it is the path matrix $M$; this is a square matrix with the $r\in R,c\in C$ entry $$M_{r,c}:=\prod_{P:r\rightarrow c}w(P)$$
where the product is taken over all paths $P$ from $r$ to $c$ and the weight $w(P)$ is the product of edge weights in the path $P$. Paths of length $0$ are included and given the weight $1$.  A path system $\mathcal{P}$ from $R$ to $C$ consists of a permutation $\sigma \in S_d$ and a set of paths $\{P_i: r_i \rightarrow c_{\sigma(i)} \mid i \in [d]\}$. Let $sgn(\mathcal{P})$ denote the sign of $\sigma$ and $w(\mathcal{P})$ denote the product of the weights $\prod_{i=1}^d w(P_i)$. Further, the path system is called vertex disjoint if its set of paths are vertex disjoint.\\ 

\noindent The Lindstr\"om-Gessel-Viennot Lemma is the expression for the determinant of the path matrix $M$ in terms of the underlying path graph $$\det(M) = \sum_{\substack{vertex\ disjoint\\ path\ systems\ \mathcal{P}}} sgn(\mathcal{P}) w(\mathcal{P}).$$  Gessel and Viennot applied it to path graphs cut out from the square lattice and proved the following nonvanishing theorem for determinants of Pascal submatrices.

\begin{mylemma}\label{gessel_viennot}
	For nonnegative integers $r_1 < r_2 <\ldots < r_d$ and $c_1<c_2<\ldots<c_d$ with $r_i \geq c_i$ for all $i \in \{1,2,\ldots,d\}$, the determinant of the matrix $\{\binom{r_i}{c_j} \mid i, j \in [d] \}$ is nonzero.  
\end{mylemma}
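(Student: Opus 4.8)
The plan is to exhibit $\bigl(\binom{r_i}{c_j}\bigr)_{i,j\in[d]}$ as the path matrix of an explicit lattice path graph and then read off nonvanishing of its determinant from the Lindstr\"om--Gessel--Viennot formula quoted above. I would work in the directed acyclic graph $G$ with vertex set $\Z\times\Z_{\geq 0}$ and edges the horizontal steps $(x,y)\to(x+1,y)$ and the diagonal steps $(x,y)\to(x+1,y+1)$, all of weight $1$. Any path of $G$ from $(-m,0)$ to $(0,k)$ consists of $m$ steps of which exactly $k$ are diagonal, so there are precisely $\binom{m}{k}$ of them (and none when $k>m\geq 0$, consistent with $\binom mk=0$). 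Taking sources $A_i:=(-r_i,0)$ and sinks $B_j:=(0,c_j)$, the $A_i,B_j$ entry of the path matrix $M$ is therefore $\binom{r_i}{c_j}$; restricting $G$ to the finite box $[-r_d,0]\times[0,c_d]$ makes this a finite DAG to which the lemma applies, giving $\det M=\sum_{\mathcal P}\sgn(\mathcal P)$ over vertex-disjoint path systems $\mathcal P$ (all weights being $1$).

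The heart of the argument is to show that the only permutation $\sigma\in S_d$ admitting a vertex-disjoint system is the identity. Since $r_1<\dots<r_d$, the sources lie on the $x$-axis in left-to-right order $A_d,\dots,A_1$, and since $c_1<\dots<c_d$, the sinks lie on the $y$-axis in bottom-to-top order $B_1,\dots,B_d$. Suppose a vertex-disjoint system $\mathcal P$ had $\sigma(i)>\sigma(k)$ for some $i<k$. View the paths $P_i\colon A_i\to B_{\sigma(i)}$ and $P_k\colon A_k\to B_{\sigma(k)}$ as functions sending an $x$-coordinate to the corresponding height; both are defined on all integers $x\in[-r_i,0]$. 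Their difference $h_{P_k}(x)-h_{P_i}(x)$ equals $h_{P_k}(-r_i)\geq 0$ at $x=-r_i$ (where $P_i$ is still at its starting height $0$), equals $c_{\sigma(k)}-c_{\sigma(i)}<0$ at $x=0$, and changes by at most $1$ as $x$ increments; by the discrete intermediate value theorem it vanishes at some $x^\ast$, meaning $P_i$ and $P_k$ share the vertex at $x^\ast$ --- contradicting disjointness. Hence $i<k\Rightarrow\sigma(i)<\sigma(k)$, so $\sigma=\mathrm{id}$.

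It remains to see that a vertex-disjoint system realizing $\sigma=\mathrm{id}$ actually exists: let $P_i$ take $c_i$ diagonal steps and then $r_i-c_i$ horizontal steps, which is legal precisely because $r_i\geq c_i$ (the only place this hypothesis is used). A short check using $r_i<r_k$ and $c_i<c_k$ shows that at every height common to $P_i$ and $P_k$ (for $i<k$) the path $P_i$ lies strictly to the right of $P_k$, so the system is pairwise vertex-disjoint. Therefore $\det M$ equals the positive integer counting such systems, and in particular is nonzero. I expect the delicate step to be the ``$\sigma=\mathrm{id}$'' claim --- more precisely, choosing the source and sink placement so that $x$-monotonicity of lattice paths rules out crossings --- while the reduction to a finite DAG and the explicit disjoint system are routine.
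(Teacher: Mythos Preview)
Your proof is correct and follows essentially the same Lindstr\"om--Gessel--Viennot strategy as the paper: realize $\binom{r_i}{c_j}$ as a lattice path count, argue that only the identity permutation supports a vertex-disjoint system, and exhibit one explicitly. The only cosmetic difference is your choice of right/diagonal steps versus the paper's right/up steps (related by a shear), and you spell out via a discrete intermediate value argument what the paper summarizes as ``the geometry forces all vertex disjoint path systems to have the identity permutation.''
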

\begin{proof}
Consider the directed acyclic graph below with unit edge weights and distinguished vertex subsets $\{r_1,r_2,\ldots,r_d\}$ and $\{c_1,c_2,\ldots,c_d\}$.  The $r_1^{th}$ vertex on the first column is labelled $r_1$, the $r_2^{th}$ vertex on the first column is labelled $r_2$ and so on. Likewise, the $c_1^{th}$ vertex on the diagonal is labelled $c_1$, the $c_2^{th}$ vertex on the diagonal is labelled $c_2$ and so on. The horizontal edges are directed from left to right and the vertical edges from bottom to top. All edges have weight $1$.\\

\begin{tikzpicture}
\draw[fill] (0,8) circle (1pt);
\draw[dashed,->-](0,7)--(0,8);
\draw[->-] (0,7) -- (0,8);
\node at (0, 8.3)  {$0$};

\foreach \y in {1,2,3,4,5,6,7,8}
\foreach \x in {0,...,\y}
\draw[fill] (\x,8-\y) circle (1pt) coordinate (m,\x,\y);

\foreach \y in {1,2,3}
\foreach \x in {0,...,\y}
\draw[->-](\x,7-\y) -- (\x,8-\y);

\foreach \x in {0,1,2,3,4}
\draw[dashed,->-](\x,3)--(\x,4);

\foreach \y in {5,6}
\foreach \x in {0,...,\y}
\draw[->-](\x,7-\y) -- (\x,8-\y);

\foreach \x in {0,1,2,3,4,5,6,7}
\draw[dashed,->-](\x,0)--(\x,1);

\foreach \y in {8}
\foreach \x in {0,...,\y}
\draw[->-](\x,7-\y) -- (\x,8-\y);

\foreach \y in {1,2,3,4,5,6,7,8}
\foreach \x in {1,...,\y}
\draw[->-](\x-1,8-\y) -- (\x,8-\y);

\node at (-0.5, 7)  {$1$}; 
\node at (-0.5, 6)  {$2$}; 
\node at (-0.5, 5)  {$r_1$}; 
\node at (-0.5, 4)  {$r_2$}; 
\node at (-0.5, 0)  {$r_i$}; 

\node at (1.5,7) {$1$}; 
\node at (2.5,6) {$c_1$}; 
\node at (3.5,5) {$c_2$}; 
\node at (4.5,4) {$4$}; 
\node at (5.5,3) {$c_j$};

\node at (10,1.5) {$r_i$};
\node at (11.5,4) {$.$};
\node at (13.2,4.2) {$c_j$};
\node at (13,2.5) {$.$};
\draw[->-](11.5,4) -- (13,4) node[midway,above]{\textcolor{red}{$1$}};
\draw[->-](13,2.5) -- (13,4) node[midway,sloped,left,rotate=270]{\textcolor{red}{$1$}};
\draw[dashed,->-](10.2,1.7) to[bend left] (11.5,4);
\draw[dashed,->-](10.2,1.7) to[bend right] (13,2.5);

\node at (9.8,3.5) {$P_{r_i-1,c_j-1}$};

\node at (12.4,1.4) {$P_{r_i-1,c_j}$};

\node at (10.5,5.5) {Number of $r_i \ra c_j$ paths $P_{r_i,c_j} = P_{r_i-1,c_j-1}+P_{r_i-1,c_j}$};

\end{tikzpicture}\\

\noindent Since all the edge weights are 1, the $(i,j)^{th}$ entry $M_{i,j}$ of the path matrix is the number of paths $P_{r_i,c_j}$ from $r_i$ to $c_j$. This satisfies the two term recurrence $P_{r_i,c_j} = P_{r_i-1,c_j-1}+P_{r_i-1,c_j}$ as evident from the picture on the right. This is Pascal's identity for binomials. The boundary conditions $\binom{r_i}{0} =1$ and $\binom{r_i}{c_i}=1$ for $r_i=c_i$ are consistent with the path formulation. We conclude that the associated path matrix is indeed $\{\binom{r_i}{c_j}  \mid i, j \in [d] \}$. The geometry forces all vertex disjoint path systems to have the identity permutation, which has sign $1$. Hence the determinant is a positive number provided there is at least one vertex disjoint path system. By the condition $r_i \ge c_i$ for all $i$, there is at least one, namely for each $r_i \ra c_i$, traverse $c_i$ edges right before turning up.  
\end{proof} 
\noindent Cohen, Haeupler and Schulman derived the following additive uncertainty from this nonvanishing.
\begin{mylemma}\label{newton_uncertainty_lemma}
	Let $(z_0,z_1,\ldots,z_{n-1})$ be an integer string with Newton basis representation $(a_0,a_1,\ldots,$ $a_{n-1})$. If $c$ is the first index with $z_c$ nonzero, then $$Sparsity((z_c,z_{c+1},\ldots,z_{n-1}))+Sparsity((a_c,a_{c+1},\ldots,a_{n-1}))\geq n-c + 1.$$ Equivalently, the polynomial $\sum_{j=0}^{n-1}a_j \binom{x}{j}$ of sparsity $s$, whose first nonvanishing when evaluated on nonnegative integers is $c$, has at most $s-1$ integer roots greater than $c$. 
\end{mylemma}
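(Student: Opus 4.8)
The plan is to collapse the two (equivalent) assertions of the lemma into a single statement --- that a polynomial which is sparse in the Newton basis has few integer roots beyond the location of its lowest coefficient --- and then to prove that statement by strong induction on the sparsity. Lemma~\ref{gessel_viennot} would be invoked exactly once, at the very end, after the induction hypothesis has been used to arrange a Pascal submatrix of the correct ``staircase'' shape.

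First the reduction. Since $z_i=\sum_j a_j\binom{i}{j}$ and $z_0=\dots=z_{c-1}=0$, the inversion formula forces $a_0=\dots=a_{c-1}=0$ and $a_c=z_c\neq 0$; hence the polynomial $p(x):=\sum_{j=0}^{n-1}a_j\binom{x}{j}$ has its lowest nonzero Newton coefficient at index $c$, its sparsity $s$ equals $\mathrm{Sparsity}(a_c,\dots,a_{n-1})$, and because $z_i=p(i)$ with $z_c\neq 0$ we have $\mathrm{Sparsity}(z_c,\dots,z_{n-1})=(n-c)-\#\{\,i\in\{c+1,\dots,n-1\}: p(i)=0\,\}$. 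So the claimed inequality follows as soon as $p$ has at most $s-1$ integer roots exceeding $c$, which is precisely the ``equivalently'' form. I would prove the following slightly more general statement, call it $C(s)$, by induction on $s\ge 1$: any $p(x)=\sum_{\ell=1}^{s}a_\ell\binom{x}{j_\ell}$ with integers $0\le j_1<\dots<j_s$ and all $a_\ell\neq 0$ has at most $s-1$ integer roots exceeding $j_1$. The base case $C(1)$ is immediate, since $\binom{x}{j_1}$ vanishes precisely on $\{0,1,\dots,j_1-1\}$.

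For the inductive step, assume $C(1),\dots,C(s-1)$ and suppose towards a contradiction that $p$ has $s$ distinct integer roots $i_1<\dots<i_s$, all greater than $j_1$. The crux is to show the staircase inequalities $i_t\ge j_t$ for every $t$. For $t=1$ this is clear. For $t\ge 2$, suppose instead $i_t<j_t$, so $i_1<\dots<i_t<j_t$; since $\binom{m}{j_\ell}=0$ for all integers $m$ with $0\le m<j_\ell$, we have $p(m)=p^{(t-1)}(m)$ for every integer $m\in\{0,1,\dots,j_t-1\}$, where $p^{(t-1)}(x):=\sum_{\ell=1}^{t-1}a_\ell\binom{x}{j_\ell}$. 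Hence $i_1,\dots,i_t$ would be $t$ integer roots of $p^{(t-1)}$ exceeding $j_1$, whereas $p^{(t-1)}$ has sparsity $t-1$ with lowest coefficient at $j_1$, so $C(t-1)$ allows at most $t-2$ of them --- a contradiction. With the staircase inequalities in hand, $i_1<\dots<i_s$, $j_1<\dots<j_s$ and $i_t\ge j_t$ for all $t$, Lemma~\ref{gessel_viennot} shows the Pascal submatrix $M=\big(\binom{i_t}{j_\ell}\big)_{1\le t,\ell\le s}$ is nonsingular. But $(a_1,\dots,a_s)$ is a nonzero vector in its kernel, since $\sum_{\ell=1}^{s}a_\ell\binom{i_t}{j_\ell}=p(i_t)=0$ for each $t$ --- a contradiction. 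So $p$ has at most $s-1$ integer roots exceeding $j_1$, which closes the induction and, by the reduction above, proves the lemma.

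The step I expect to be the main obstacle is engineering the staircase $i_t\ge j_t$: a naive selection of roots (say the $s$ smallest ones) need not dominate the support $\{j_1,\dots,j_s\}$ coordinatewise, and the trivial degree bound on $p^{(t-1)}$ is far too weak to supply it. The resolution is the self-reference above --- each truncation $p^{(t-1)}$ is itself a Newton-sparse polynomial of strictly smaller sparsity, so the induction hypothesis controls its integer roots and forces $i_t\ge j_t$ --- after which Lemma~\ref{gessel_viennot} finishes the argument in one line.
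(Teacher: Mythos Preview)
Your proof is correct. The paper itself does not give a proof of this lemma --- after proving Lemma~\ref{gessel_viennot} in detail it merely states that ``Cohen, Haeupler and Schulman derived the following additive uncertainty from this nonvanishing'' and cites \cite{chs} --- so your write-up is in fact supplying the missing derivation. The reduction you give (showing $a_0=\dots=a_{c-1}=0$, $a_c=z_c\neq 0$, and translating the sparsity inequality into the root count) is clean and correct, and the strong induction on the sparsity $s$ to force the staircase $i_t\ge j_t$ is a valid way to set up the invocation of Lemma~\ref{gessel_viennot}.

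For comparison, the more common packaging of the same idea (and the one implicit in the paper's ``it suffices'' step in Lemma~\ref{lemma_sparsity}) avoids the outer induction: assume $s$ roots $i_1<\dots<i_s$ greater than $j_1$, let $t_0$ be the least index with $i_{t_0}<j_{t_0}$, note that for every $t\le t_0$ all terms $\binom{i_t}{j_\ell}$ with $\ell\ge t_0$ vanish so $\sum_{\ell<t_0}a_\ell\binom{i_t}{j_\ell}=0$, and apply Lemma~\ref{gessel_viennot} directly to the $(t_0-1)\times(t_0-1)$ block $\bigl(\binom{i_t}{j_\ell}\bigr)_{t,\ell<t_0}$ (which satisfies the staircase hypothesis by minimality of $t_0$) to force $a_1=\dots=a_{t_0-1}=0$. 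This is your inductive step unwound; the two arguments are equivalent, and yours has the mild advantage that the statement $C(s)$ being inducted on is self-contained.
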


\section{Alphabet Reduction Machinery}\label{alphabet_reduction_section}
The tree code $TC_\Z$ discussed in the previous section is an integer tree code, but we are ultimately interested in binary tree codes.  To this end, Cohen, Haeupler and Schulman created binary tree codes by wrapping $TC_\Z$ in intricate classical coding theory machinery \cite[\S 6]{chs}.\\

\noindent As the first step in this construction, for any $n \ge 1$, by restricting the input alphabet to $[-Z, Z]$ and applying the bound $|TC_\Z(z_0, \dots, z_{i-1})_i| \le 2^i\max\{z_0^2, \dots, z_{i-1}^2\}$ (\cite[Theorem 1.3]{chs}), we get a tree code
\[
[-Z, Z]^n \ra [-2^nZ, 2^nZ]^n
\]
with distance $1/2$, i.e., a tree code having input alphabet $[-Z, Z]$, output alphabet $[-Z2^n, Z2^n]$, and length $n$.  These are the tree codes that we decode in \S \ref{section_alg_alg}.\\

\noindent \textbf{Construction of the binary tree codes:} We now briefly sketch the remainder of the CHS binary tree code construction and explain how to decode it, given a decoding algorithm for $TC_\Z$.  A reader interested only in decoding integer tree codes may skip this section.\\

\noindent The construction proceeds in several steps.\\

\noindent \textbf{Step 0:} For any $\ell \ge 1$, from the above tree code $[-Z, Z]^\ell \ra [-2^\ell Z, 2^\ell Z]^\ell$ with $Z = 2^{\ell-1}$, we get a tree code
\[
TC_{\ell}: \left(\{0, 1\}^\ell\right)^\ell \ra \left(\{0, 1\}^{3\ell}\right)^\ell
\]
with distance $1/2$, i.e., a tree code having input alphabet $\{0, 1\}^\ell$, output alphabet $\{0, 1\}^{3\ell}$, and length $\ell$.\\

\noindent \textbf{Step 1:} This step constructs a modified type of tree code called a lagged tree code.

\begin{mydef}
	(\textit{Lagged Tree Codes}) For a positive integer $n$, alphabet sets $\Sigma_1$ and $\Sigma_2$, distance $0<\delta<1$ and lag $L$, a function $TC : \Sigma_1^n \longrightarrow \Sigma_2^n$ is a lagged tree code with distance $\delta$ and lag $L$ if 
	\begin{itemize}
		\item $TC$ is online and 
		\item for every distinct $x,y \in \Sigma_1$ with split $s:=split(x,y)$ and every $\ell \in [L, n-s]$, $$d_H\left(TC(x)_{[s,s+\ell]},TC(y)_{[s,s+\ell]}\right) \geq \delta(\ell+1)$$ where $d_H$ denotes the Hamming distance. 
	\end{itemize}	 
\end{mydef}

\noindent From $TC_\ell$ together with a modified type of block error correcting code (see \cite[\S 3.1]{chs}), CHS construct a lagged tree code
\[
TCLag_{\ell}: \{0, 1\}^{\ell} \ra \left(\{0, 1\}^{c_{lag}}\right)^{\ell}
\]
with distance $1/3$ and lag $16\sqrt{\ell}$, for some constant $c_{lag}$ \cite[Claim 6.3]{chs}.\\

\noindent \textbf{Step 2:} Fix $1 \le \ell \le n$.  From $TCLag_\ell$, CHS construct a function
\[
TCLag_{\ell}^n: \{0, 1\}^n \ra \left(\{0, 1\}^{2c_{lag}}\right)^n
\]
that is online and satisfies a weakening of the lagged tree code property \cite[Claim 6.4]{chs}: for every pair of distinct inputs $x,y$ with split $s:=split(x,y) \le n - \ell/2$ and every $d \in [16\sqrt{\ell}, \ell/2]$,
\[
d_H\left(TC(x)_{[s,s+d]},TC(y)_{[s,s+d]}\right) \geq d/3.
\]

\noindent \textbf{Step 3:} From $TCLag_\ell^n$, CHS construct a lagged tree code
\[
TC^{n}: \{0, 1\}^n \ra \left(\{0, 1\}^{2c_{lag}j}\right)^n
\]
with distance $1/3$ and lag $2^{14}$, where $j = \mathcal{O}(\log\log(n))$.\\

\noindent \textbf{Step 4:} Finally, from $TC^n$, CHS construct a binary tree code
\[
TC^{n\prime}: \{0, 1\}^n \ra \left(\{0, 1\}^{\mathcal{O}(\log\log(n))}\right)^n
\]
with distance $1/3$.\\

\begin{myrmk}\label{rmk_larger_alphabet}
The above construction, as well as the decoding algorithm in Theorem \ref{thm_binary_decode}, are easily generalized to the case where we replace $TC_\Z$ with a tree code having larger coefficients, or equivalently, we replace the family of tree codes $TC_l$ with a family having slightly larger alphabets.  We construct such codes in \S \ref{section_cyclotomy}--\ref{section_transcendence} below.  Starting from a tree code
\[
TC_{\alpha, \ell}: \left(\{0, 1\}^{\ell^\alpha}\right)^{\ell} \ra \left(\{0, 1\}^{c_1\ell^{\alpha}}\right)^{\ell}
\]
in place of $TC_\ell$ for some constants $c_1, \alpha > 0$, we can replace $TCLag_\ell$ with a lagged tree code
\[
TCLag_{\alpha, \ell}: \{0, 1\}^{\ell} \ra \left(\{0, 1\}^{c_2}\right)^{\ell}
\]
having lag $16\ell^{\alpha/(\alpha+1)}$ for some constant $c_2$, using $TC_{\alpha, \ell^{1/(\alpha+1)}}$ in place of $TC_{\sqrt{\ell}}$.  We can then use this to replace $TCLag_{\ell}^n$ with an analogous type of online function
\[
TCLag_{\alpha, \ell}^n: \{0, 1\}^n \ra \left(\{0, 1\}^{2c_2}\right)^n
\]
having lag $16\ell^{\alpha/(\alpha+1)}$, and then replace $TC^n$ with a lagged tree code
\[
TC_\alpha^{n}: \{0, 1\}^n \ra \left(\{0, 1\}^{2c_2j}\right)^n
\]
having lag $64^{\alpha +1}$ for some $j = \mathcal{O}(\log\log(n))$, by replacing the sequence $\ell_1, \dots, \ell_j$ used in its construction with a sequence defined recursively by $\ell_{j} := 2n$, $\ell_{i-1} := 32\ell_i^{\alpha/(\alpha+1)}$ and $32\ell_1^{\alpha/(\alpha+1)} \le 64^{\alpha+1}$.
\end{myrmk}

\noindent \textbf{Decoding the binary tree codes:}

\begin{mydef}\label{definition_decoding}
(\textit{Decoding algorithm for a tree code})
A decoding algorithm for a tree code $T: \Sigma_1^n \ra \Sigma_2^n$ correcting a $\gamma(n)$ fraction of errors with success probability $p(n)$ is a probabilistic algorithm which inputs $\hat{x} \in \Sigma_2^q$ and $(m'_1, \dots, m'_{r}) \in \Sigma_1^r$ for some $r < q \le n$ and outputs a value $m'_{r+1}$, such that if there exists $m \in \Sigma_1^q$ satisfying $m_i = m'_i$ for all $i \in [r]$ and $d_H(T(m)_{[i, q]}, \hat{x}_{[i, q]}) < \gamma(n)(q-i+1)$ for all $i \in [r+1]$, then $m'_r = m_r$ with probability at least $p(n)$.  Here $\hat{x}$ is interpreted as an errored output of $T$, truncated to the first $q$ symbols, and $m'$ is its partial decoding.
\end{mydef}

\begin{mythm}\label{thm_binary_decode}
Let us be given a family of decoding algorithms for the $TC_{\ell}$ correcting a $\gamma(\ell)$ fraction of errors with success probability $p(\ell)$, where $\gamma(\ell)$ and $p(\ell)$ are nonincreasing.  Then for all $n$, one can define a decoding algorithm for the binary tree code $TC^{n\prime}$ correcting a $2\gamma(\sqrt{n})/3$ fraction of errors with success probability $p(n)^{O(\log^*(n))}$.  Furthermore, the runtime of this algorithm is polynomial in $n$ and the runtime of the decoding algorithm for $TC_{\ell}$ for $\ell \le n$.
\end{mythm}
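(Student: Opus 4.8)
The plan is to peel the four layers of the CHS construction off from the outside in, converting at each layer a decoding algorithm for the inner object into one for the outer object, and then composing the four reductions with the hypothesized decoders for the $TC_\ell$. The uniform primitive implemented at every layer is the one of Definition~\ref{definition_decoding}: given a prefix $(m'_1,\dots,m'_r)$ of a candidate message and a truncated errored codeword $\hat x\in\Sigma_2^q$, output the next message symbol. Implementing this for an outer code amounts to locating the sub-window(s) of $\hat x$ that the inner code controls, stripping off the polynomial-time invertible wrapper that produced the outer symbols from the inner symbols on that window, and invoking the inner primitive there; correctness holds because each wrapper and each windowing map is built precisely so that a valid outer message restricts to a valid inner message lying inside the inner decoder's error ball.

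Going through the layers: Step~4 only removes a constant lag, so for $TC^{n\prime}$ the windows shorter than $2^{14}$ are handled by brute-force decoding of a constant-size gadget, and the longer windows reduce to decoding the lagged tree code $TC^n$. Step~3 assembles $TC^n$ from the online functions $TCLag_{\ell_1}^n,\dots,TCLag_{\ell_j}^n$ at the geometrically spaced scales $\ell_1<\dots<\ell_j=\Theta(n)$, $j=\mathcal O(\log\log n)$; to decode one more symbol one runs the $TCLag_{\ell_k}^n$ primitives at the scales that can be responsible for the relevant window size and combines their answers. Step~2 builds $TCLag_\ell^n$ as a combination of shifted restrictions of $TCLag_\ell$ obeying the weakened property for window sizes in $[16\sqrt\ell,\ell/2]$, so decoding is again a choice of offset window followed by a call to the $TCLag_\ell$ primitive. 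Step~1 is where the hypothesized decoders enter: $TCLag_\ell$ is $TC_{\sqrt\ell}$ wrapped in the block error-correcting code of \cite[\S 3.1]{chs}, which spreads each $TC_{\sqrt\ell}$ symbol over $\Theta(\sqrt\ell)$ positions with enough redundancy that any errors confined to a window of size at least $16\sqrt\ell$ (hence the lag) are removed by the block code's own efficient decoder; after running it we are left with a $TC_{\sqrt\ell}$ codeword carrying few symbol errors, to which we apply the given $TC_{\sqrt\ell}$ decoder.

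Two accountings then finish the argument. The correctable error fraction degrades only at Step~1, where the distance drops from $1/2$ for $TC_\ell$ to $1/3$ for $TCLag_\ell$, a factor $2/3$ that carries unchanged through Steps~2--4 (all of distance $1/3$); the composite decoder therefore corrects a $\tfrac23\gamma(\cdot)$ fraction, the argument being the largest scale at which a base decoder is invoked, $\sqrt{\ell_j}=\Theta(\sqrt n)$, and monotonicity of $\gamma$ gives the stated $2\gamma(\sqrt n)/3$. The runtime is polynomial in $n$ together with polynomially many invocations of the $TC_\ell$ decoders for $\ell\le n$, since the block code, the combination and windowing maps of Steps~2 and 3, and the constant gadget of Step~4 are all polynomial-time encodable and decodable.

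I expect the main obstacle to be Step~1: re-deriving, in the online, prefix-respecting setting of Definition~\ref{definition_decoding}, that correcting the block-code wrapper really does leave a $TC_{\sqrt\ell}$ instance meeting the exact hypothesis of the given decoder --- a valid message inside the prescribed Hamming ball of the truncated codeword --- with $16\sqrt\ell$ being precisely the window size the block code needs before it starts correcting. The second delicate point is the success-probability bound: one must count the base-decoder calls chained along a single ``decode one more symbol'' computation through all four layers and the scales of Step~3, and show this number is $\mathcal O(\log^*(n))$ rather than, say, one per scale; a union bound over these calls, each succeeding with probability at least $p(\ell)\ge p(n)$, then yields $p(n)^{\mathcal O(\log^*(n))}$.
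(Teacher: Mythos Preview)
Your plan is essentially the paper's: peel off the four CHS layers in reverse, using the block-code decoder in Step~1 and the given $TC_{\sqrt\ell}$ decoder at the bottom. Two small corrections align you exactly with the paper. First, in Step~3 you do not need to run the $TCLag_{\ell_k}^n$ primitives at several scales and combine: the scales $\ell_1,\dots,\ell_j$ are chosen so that for any window length $q-r-1\ge 2^{14}$ there is a \emph{single} index $i_0$ with $16\sqrt{\ell_{i_0}}\le q-r-1\le \ell_{i_0}/2$, and you simply call the decoder for that one scale on the $t_{i_0}$-coordinates. Second, this collapses your delicate success-probability count: a single ``decode one more symbol'' query at the top triggers exactly one $TC_{\sqrt{\ell_{i_0}}}$ call at the bottom, so the exponent is $O(1)$, well inside the claimed $O(\log^*(n))$; you will not need to argue anything subtle about chaining across scales.
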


\noindent In particular, using our decoding algorithm (correcting a $\gamma(\ell)=\Omega\left(\ell^{-1/2}\log^{-1/2}(\ell)\right)$ fraction of errors) for the CHS integer tree codes, we obtain a decoding algorithm for $TC^{n\prime}$ correcting a $\Omega\left(n^{-1/4}\log^{-1/2}(n)\right)$ fraction of errors in expected time polynomial in $n$. Observing that a $\Omega\left(n^{-1/4}\log^{-1/2}(n)\right)$ fraction of errors is a $\Omega\left(n^{3/4}\log^{-1/2}(n)\right)$ number of errors, we see that Theorem \ref{thm_intro_integer_decoding} implies Theorem \ref{thm_intro_binary_decoding}.

\begin{proof}[Proof sketch]
We proceed one step at a time, making reference to the precise constructions of the codes in each step (see \cite[\S 6]{chs}). \\

\noindent \textit{Step 1:} We define a decoding algorithm for a lagged tree code the same as a decoding algorithm for a non-lagged tree code, except that we add the condition $r < q - L$, where $L$ is the lag.  We obtain such an algorithm for $TCLag_{\ell}$ as follows.  Given an input interpreted as an errored truncated output of $TCLag_{\ell}$, to each complete block (supposedly) output by $ECC$, apply an algorithm to decode $ECC$ up to distance $5/12$.  Such an algorithm is easily obtained from a decoding algorithm for the algebraic geometry codes used in \cite[Appendix A]{chs}, e.g., the Shokrollahi-Wasserman algorithm \cite{SW99}.  Then concatenate the corrected blocks to form an errored truncated output of $TC_{\sqrt{\ell}}$ and decode it using the given decoding algorithm.\\

\noindent \textit{Step 2:} We define a decoding algorithm for this restricted type of lagged codes in the obvious way.  We obtain such an algorithm for $TCLag_{\ell}^n$ as follows.  Given an input interpreted as an errored truncated output of $TCLag_{\ell}^n$ and a target index $q$, let $j$ be such that the $j$-th input block $m_j$ contains index $q$.  If $j$ is even, write $j = 2i$ and apply the decoding algorithm for $TCLag_{\ell}$ to the input's errored value of $e_i$.  If $j$ is odd, instead write $j = 2i-1$ and use $o_i$.\\

\noindent \textit{Step 3:} We obtain a decoding algorithm for the lagged tree code $TC^{n}$ as follows.  Given an input interpreted as an errored truncated output of $TC^{n}$ of length $q$ and a target index $r$, let $i_0 \in [j]$ be such that $16\sqrt{\ell_{i_0}} \le q - r - 1 \le \ell_{i_0}/2$, which exists by definition of the $\ell_i$ and the lagged condition $r < q - 2^{14}$.  Apply the decoding algorithm for $TCLag_{\ell}^n$ to the coordinates $t_{i_0}$ of the input.\\

\noindent \textit{Step 4:} If $r < q - 2^{14}$, use the decoding algorithm for $TC^{n}$; otherwise use majority rule to decode the trivial tree code added by step 4.
\end{proof}

\section{The Decoding Algorithm}\label{section_alg_alg}
In this section, we devise an algebraic algorithm to recover the integer string $(z_0,z_1,\ldots, z_{n-1})$ given an erroneous version $((\hat{z}_0,\hat{a}_0),(\hat{z}_1,\hat{a}_1),\ldots,(\hat{z}_{n-1},\hat{a}_{n-1}))$ of its encoding $((z_0,a_0),(z_1,a_1),\ldots,$ $(z_{n-1},a_{n-1}))$. The number of errors  (that is, the number of pairs such that $(\hat{z}_i,\hat{a}_i) \neq (z_i,a_i)$), possibly adversarial, is assumed to be bounded by $\mathcal{O}(\sqrt{n/\log(n)})$.\\

\noindent We assume the evaluations (correct or corrupt) are bounded integers lying in $[-Z, Z]$. The Newton coefficients (correct or corrupt) are consequently bounded integers lying in $[-Z2^n, Z2^n]$.

\subsection{Evaluation Non-error Location Parameters}
\label{sec_params}
Our goal is to first identify a sufficiently large set of correct evaluation indices $I_z$. In this subsection, we outline the desired non-error location framework and determine a consistent set of parameters to eventually correct $\Omega(\sqrt{n/\log(n)})$ errors. \\ 

\noindent \textbf{Input:} $((\hat{z}_0,\hat{a}_0),(\hat{z}_1,\hat{a}_1),\ldots,(\hat{z}_{n-1},\hat{a}_{n-1}))$.\\
  
\noindent \textbf{Parameters:} The evaluation non-error location algorithm is parametrized by  $\alpha, \beta, \delta, \epsilon \in [n]$, $r \in \R$ satisfying:
\begin{enumerate}[(1)]
  \item $0 < r < \alpha\epsilon/n$
  \item $\alpha + \beta - 1 \le n/2$
  \item $\delta \le -\frac{\alpha^2\epsilon}{n} - \frac{\alpha\beta\epsilon}{n} - r(n - (\alpha + \beta) + 1) + \alpha - 1$
  \item $\alpha \le \beta$.
\end{enumerate}

\noindent \textbf{Output:} Suppose there exists $(z_0,z_1,\ldots,z_{n-1}) \in [-Z, Z]^n$ such that its encoding and the received corrupted encoding have Hamming distance at most $\epsilon$. Then with probability at least $1 - ne^{-nr^2}$, the output is a set of indices $I \subset [0, \alpha)$ such that $|I| = \delta - \epsilon + 1$ and such that if $i \in I$, then $\hat{z}_i = z_i$.  Otherwise, the output is undefined.\\

\noindent \textbf{Parameters to correct $\Omega(\sqrt{n/\log(n)})$ errors:}  We next determine a consistent set of algorithm parameters to correct $\Omega(\sqrt{n/\log(n)})$ errors. Suppose we want a success probability $\ge 1 - n^{-c}$ for some $c > 0$, and we want
\begin{equation}\tag{*}
\frac{\delta - \epsilon + 1}{\alpha} \ge 1/2.
\end{equation}
This implies $|I|/\alpha \ge 1/2$ where $I$ is the output, so that we identify at least a $1/2$ fraction of correct indices in $[0, \alpha)$. \\ 

\noindent To achieve success probability $\ge 1 - n^{-c}$, it suffices to set
\[
r = \sqrt{\frac{c+1}{n}\ln(n)}.
\]

\noindent We now wish to find values $\alpha, \beta, \delta, \epsilon \in [n]$ satisfying conditions (1)--(4) and (*).  For simplicity, we will set $\alpha = \beta$.  To satisfy (3), we set
\[
\delta = \left\lfloor -2\frac{\alpha^2\epsilon}{n} - r(n - 2\alpha + 1) + \alpha - 1 \right\rfloor.
\]

\noindent Now to satisfy (*), it suffices to ensure
\[
\frac{-2\alpha^2\epsilon/n - rn + \alpha - 2 - \epsilon + 1}{\alpha} \ge \frac{1}{2}.
\]
Solving for $\epsilon$, it suffices to satisfy
\[
\epsilon \le \frac{\alpha/2 - rn - 1}{2\alpha^2/n + 1}.
\]

\noindent We will set $\alpha = \lfloor a\sqrt{n\ln(n)} \rfloor$ for some $a \in \R$.  It suffices to ensure
\[
\epsilon \le \frac{(a/2 - \sqrt{c+1})\sqrt{n\ln(n)} - 2}{2a^2\ln(n) + 1}.
\]
Choosing $a$ so that $a/2 - \sqrt{c+1} = 1$, we find the solution
\begin{align*}
\alpha &= \beta = \left\lfloor (2 + 2\sqrt{c+1})\sqrt{n\ln(n)} \right\rfloor \\
\epsilon &= \left\lfloor \frac{\sqrt{n}}{(8c + 16\sqrt{c+1} + 17)\sqrt{\ln(n)}} - 2 \right\rfloor
\end{align*}
with $\delta$ and $r$ as above.  Note that this satisfies conditions (1) and (2) as well, at least for sufficiently large $n$ (as a function of $c$).\\

\noindent \textbf{Working modulo a large prime: } We next discuss how to find the desired indices $I_z$ with no evaluation errors. The indices are obtained one at a time, in each step calling on an algebraic algorithm. These algebraic components are easier to express in terms of field arithmetic than $\Z$-module arithmetic. To this end, we pick and work modulo a prime number $p$, choosing $p > 2^{n^2}n^{n/2}$ to ensure that the additive uncertainty lemma \ref{newton_uncertainty_lemma} still holds. 

\begin{mylemma}\label{lemma_sparsity}
	Fix $n \in \N$.  Let $p > 2^{n^2}n^{n/2}$ be prime.
	Let $\sum_{j}a_j \binom{x}{j} \in \F_p[x]$ be a nonzero polynomial of degree less than $n$ and sparsity $s \ge 1$ in the Newton basis. Let $0 \le c $ be the smallest integer such that $\sum_{j}a_j \binom{c}{j} \neq 0$.  Then $\sum_{j}a_j \binom{x}{j}$ has at most $s-1$ distinct roots in $c, c+1, \dots, n - 1 \pmod{p}$, noting $c \le n - 1$ by degree considerations.
\end{mylemma}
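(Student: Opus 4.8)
The plan is to reduce this to the additive uncertainty principle of Lemma~\ref{newton_uncertainty_lemma}, after checking that that principle survives replacing the ground ring $\Z$ by $\F_p$. First note that the triangular shape of the Newton transform forces $a_0=\dots=a_{c-1}=0$: from $\sum_j a_j\binom{i}{j}=0$ in $\F_p$ for $i=0,1,\dots,c-1$, together with $\binom{i}{j}=0$ for $j>i$ and $\binom{i}{i}=1$, an induction on $i$ gives $a_i=0$, so all $s$ nonzero Newton coefficients of $P(x):=\sum_j a_j\binom{x}{j}$ lie in $\{c,c+1,\dots,n-1\}$. Now suppose toward a contradiction that $P$ has $s$ distinct roots in $\{c,c+1,\dots,n-1\}\pmod p$; since $n-1<p$ these correspond to $s$ distinct integers in that set, and since $P(c)\neq 0$ they all lie in $\{c+1,\dots,n-1\}$. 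Consider the evaluation vector $u:=(P(0),P(1),\dots,P(n-1))\in\F_p^n$. Its Newton coefficients are, by uniqueness of the triangular inversion, exactly $(a_0,\dots,a_{n-1})$, and its first nonzero entry is at index $c$. At least $s$ of the $n-c$ entries $u_c,\dots,u_{n-1}$ vanish, so $Sparsity((u_c,\dots,u_{n-1}))\le (n-c)-s$, whereas $Sparsity((a_c,\dots,a_{n-1}))=s$. Feeding $u$ into the $\F_p$-analogue of Lemma~\ref{newton_uncertainty_lemma} (its statement verbatim with $\Z$ replaced by $\F_p$) would then yield $(n-c)-s+s\ge n-c+1$, which is absurd. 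So $P$ has at most $s-1$ roots in the claimed range, provided the $\F_p$-analogue holds.

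It therefore remains to argue that Lemma~\ref{newton_uncertainty_lemma} holds over $\F_p$ when $p>2^{n^2}n^{n/2}$. The Cohen--Haeupler--Schulman derivation of that lemma from the Lindstr\"om--Gessel--Viennot nonvanishing invokes only the following linear-algebraic fact: certain square submatrices of $\{\binom{i}{j}\mid i,j\in[0,n)\}$ of the form $\{\binom{r_i}{c_j}\}_{i,j\in[d]}$ with $c_1<\dots<c_d$, $r_1<\dots<r_d$, $c_i\le r_i$, and $r_i\le n-1$ are invertible, so that a relation $\sum_j\binom{r_i}{c_j}a_{c_j}=0$ (for all $i$) forces every $a_{c_j}$ to vanish. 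By Lemma~\ref{gessel_viennot} the determinant of such a submatrix is a positive integer, and by Hadamard's inequality --- every entry is a binomial $\binom{r}{s}$ with $r\le n-1$, hence at most $2^{n-1}$, and $d\le n$ --- its absolute value is less than $2^{n^2}n^{n/2}<p$. A positive integer smaller than $p$ is nonzero in $\F_p$, hence a unit since $p$ is prime, so each such submatrix stays invertible over $\F_p$ and the derivation carries over verbatim.

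The routine ingredients --- the triangularity induction, the count of vanishing entries of $u$, and the Hadamard estimate --- present no real difficulty; the one point that needs care is identifying exactly which Pascal minors the proof of Lemma~\ref{newton_uncertainty_lemma} uses, so as to confirm that all of them have row and column indices below $n$ and hence determinants dominated by the threshold $p>2^{n^2}n^{n/2}$ (this is what dictates the choice of $p$). One should also handle the degenerate cases directly: $c=n-1$, where $\{c+1,\dots,n-1\}$ is empty and the conclusion is vacuous, and $s=1$, where $P$ is a nonzero scalar multiple of a single $\binom{x}{j}$.
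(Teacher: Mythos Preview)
Your proposal is correct and follows essentially the same route as the paper: reduce the claim to the nonvanishing modulo $p$ of the Pascal minors $\{\binom{r_i}{c_j}\}$ with $r_i,c_j<n$ and $r_i\ge c_i$, then observe that by Lemma~\ref{gessel_viennot} each such determinant is a positive integer and by Hadamard's inequality it is at most $2^{n^2}n^{n/2}<p$. The paper compresses your first two paragraphs into the single sentence ``It suffices to show the following,'' but the substance is identical.
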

\begin{proof} 
	It suffices to show the following: Let $0 \le a_1 < a_2 < \cdots < a_\ell < n$ and $0 \le b_1 < b_2 < \cdots < b_\ell < n$ be integers such that $a_i \ge b_i$ for each $i \in [\ell]$. Then the $\ell \times \ell$ matrix $M$ with entries $M_{i, j} = \binom{a_i}{b_j}$ has $\det M \not\equiv 0 \pmod{p}$. From Lindstr\"om-Gessel-Viennot (Lemma \ref{gessel_viennot}), $\det M > 0$.  We claim $\det M < p$.  The entries $\binom{a_i}{b_j}$ all have absolute value bounded by $2^{n-1} \le 2^n$, so by Hadamard's inequality, $|\det M| \le 2^{n\ell}\ell^{\ell/2} \le 2^{n^2}n^{n/2} < p$. 
\end{proof}

\noindent To ensure the runtime is still polynomial in $n$, the prime $p$ should satisfy $\log(p) = n^{\mathcal{O}(1)}$. All constraints are met for $\max\{2^{n^2}n^{n/2}, 2Z2^n\} < p \le 2\cdot \max\{2^{n^2}n^{n/2}, 2Z2^n\}$, and by the the prime number theorem we may generate and test to find a prime in that interval at random, using the AKS algorithm to test.\\

\noindent In light of Lemma \ref{lemma_sparsity}, we may recast the decoding problem to be over $\F_p$. Abusing notation, in the subsequent subsection, we will use $z_i,a_i,z_i^\prime,a_i^\prime$ interchangeably to denote integers or their respective reductions modulo $p$. The meaning will be clear from context.

\subsection{Error Location by the Polynomial Method}\label{sec_error_location}

\noindent The disagreement between the evaluations and Newton basis coefficients in the received string is quantified by the vector $y=(\hat{y}_0,\hat{y}_1,\ldots,\hat{y}_{n-1})\in\F_p^n$ where $$\hat{y}_i:= \hat{z}_i  - \sum_{j=0}^{n-1} \hat{a}_j\binom{j}{i}.$$

\noindent In light of the linearity of the encoding, in place of solving the decoding problem with input $(\hat{z},\hat{a}) \in \F_p^n \times \F_p^n$, it suffices to solve the decoding problem with input $(\hat{y},0) \in \F_p^n \times \F_p^n$.\\

\noindent Under this substitution, the condition on the output translates as follows. Suppose there exists $e(x)= \sum_{j=0}^{n-1} e_j \binom{x}{j} \in \F_p[x]$ of degree less than $n$, $\epsilon$-sparse in the Newton basis, such that its evaluation and the received corrupted evaluation have Hamming distance at most $\epsilon$. Our goal is to output a set of indices $I \subset [0, \alpha)$ such that $|I| \ge \delta - \epsilon + 1$ and such that if $i \in I$, then $\hat{y}_i = e(i)$. Informally, we want to identify a set of indices $I$ at which $\hat{y}$ is correct, given that it is an erroneous version of the evaluation of some sparse polynomial in the Newton basis. \\

\noindent We begin by initializing $I = \emptyset$. We call the polynomial method below $\delta - \epsilon + 1$ times, at each iteration with the input $\hat{y}$ and the current index set $I$. The polynomial method returns one index at each iteration, which we append to $I$. The success probability $1 - ne^{-nr^2}$ comes from the union bound, noting that we make $\delta - \epsilon + 1 \le \alpha \le n$ calls.\\

\noindent \textbf{The Polynomial Method:} We next describe the algebraic algorithm to locate one index where $\hat{y}$ is correct. 
As before, fix $n \in \N$, a large prime $p > 2^{n/2}n^{n/2}$ and parameters $\alpha, \beta, \delta, \epsilon \in [n]$, $r \in \R$ satisfying:
\begin{enumerate}[(1)]
  \item $0 < r < \alpha\epsilon/n$
  \item $\alpha + \beta - 1 \le n/2$
  \item $\delta \le -\frac{\alpha^2\epsilon}{n} - \frac{\alpha\beta\epsilon}{n} - r(n - (\alpha + \beta) + 1) + \alpha - 1$
  \item $\alpha \le \beta$.
\end{enumerate}

\noindent \textbf{Input:} A vector $\hat{y} \in \F_p^n$ and a subset $H \subset [0, \alpha)$ such that $|H| \le \delta - \epsilon$.\\

\noindent\textbf{Output:} Suppose there exists a polynomial $e(x) \in \F_p[x]$ of degree less than $n$, $\epsilon$-sparse in the Newton basis, such that $d_H(\hat{y}, (e(0), e(1), \dots, e(n-1))) \le \epsilon$.  Then with probability at least $1 - e^{-nr^2}$, the output is an index $i_0 \in [0, \alpha)$ such that $i_0 \notin H$ and $\hat{y}_{i_0} = e(i_0)$.  Otherwise, the output is undefined.\\

\noindent\textbf{Algorithm:} We adapt the Shokrollahi-Wasserman algorithm \cite{SW99}.\\

\noindent Choose a set of indices $J \subset [\beta, n)$ with $|J| = n - (\alpha + \beta) + 1$ uniformly at random.  Consider the problem of finding polynomials $b(x), c(x) \in \F_p[x]$ such that
\begin{itemize}
  \item $b(x)$ is supported on $J \cup [0, \beta)$ in the Newton basis (i.e., on the basis polynomials $\binom{x}{j}$ for $j \in J \cup [0, \beta)$)
  \item $\deg c(x) \le \alpha - 1$
  \item for all $i \in H$, $c(i) = 0$
  \item for all $i \in [0, n) \setminus H$, $b(i) + c(i)\hat{y}_i = 0$.
\end{itemize}
This problem corresponds to a linear system with $n+1$ variables ($|J| + \beta = n - \alpha + 1$ from $b(x)$ and $\alpha$ from $c(x)$) and $n$ constraints.  Hence we can use linear algebra to find a solution with $b(x), c(x)$ not both 0.\\

\noindent Output the smallest value of $i \in [0, \alpha)$ for which $c(i) \neq 0$.

\subsection{Correctness}
Let $e(x)$ be a polynomial of the form described in the output condition.  Define $g(x) := b(x) + c(x)e(x)$.  Let $g|_n(x)$ denote the polynomial formed from $g(x)$ by setting all Newton basis coefficients outside of $[0, n)$ to 0, i.e., $g|_n(x) := \sum_{i=0}^{n-1} g_i \binom{x}{i}$ where the $g_i$ are defined by $g(x) = \sum_{i=0}^\infty g_i \binom{x}{i}$.  Note that $g(i) = g|_n(i)$ for all $i \in [0, n)$.\\

\noindent Let $K \subset [0, n)$ be the set of indices in $[0, n)$ at which $c(x)e(x)$ has nonzero coefficient in the Newton basis, i.e., the set of $i \in [0, n)$ such that the basis polynomial $\binom{x}{i}$ has nonzero coefficient in $c(x)e(x)$.  For any basis polynomial $\binom{x}{i}$, $c(x)\binom{x}{i}$ has sparsity at most $\deg c(x) + 1 \le \alpha$.  Indeed, it has degree at most $\deg c(x) + i$ and is zero at all $x \in \{0, \dots, i - 1\}$, so it can be written in terms of $\binom{x}{i}, \binom{x}{i+1}, \dots, \binom{x}{\deg c(x) + i}$.  Hence $|K| \le \alpha \epsilon$.\\

\noindent Now $g(x)|_n$ is supported on $J \cup [0, \beta) \cup K$ in the Newton basis.  Thus its sparsity is at most $|J \cup [0, \beta) \cup K|$.  We wish to upper bound this quantity.

\begin{mylemma}
Let $S, T \subset U$, where $S$ and $U$ are fixed and $T$ is a uniform random size $\tau$ subset of $U$.  Then for any $0 < r < |S|/|U|$,
\[
\Pr\left[|S \cup T| \ge |S| - \frac{|S|\tau}{|U|} + (r + 1)\tau\right] \le e^{-2\tau r^2}.
\]
\end{mylemma}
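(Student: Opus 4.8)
The plan is to recognize $|S \cap T|$ as a hypergeometric random variable and apply a Chernoff--Hoeffding tail bound for sampling without replacement. The first step is pure bookkeeping: since $T$ has size $\tau$, we have $|S \cup T| = |S| + |T| - |S \cap T| = |S| + \tau - |S \cap T|$, and substituting this into the event in the statement shows
\[
|S \cup T| \ge |S| - \frac{|S|\tau}{|U|} + (r+1)\tau \quad\Longleftrightarrow\quad |S \cap T| \le \frac{|S|\tau}{|U|} - r\tau .
\]
So it suffices to bound the lower tail $\Pr\!\left[\,|S\cap T| \le \tfrac{|S|\tau}{|U|} - r\tau\,\right]$ by $e^{-2\tau r^2}$. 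The hypothesis $0 < r < |S|/|U|$ merely ensures the threshold $\tau(|S|/|U| - r)$ is strictly positive; were it nonpositive the probability would be $0$ and the bound trivial.

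Next I would set up the hypergeometric picture and compute the mean. Realize the uniform random $\tau$-subset $T$ by drawing elements of $U$ one at a time without replacement, and let $Y_k \in \{0,1\}$ indicate that the $k$-th drawn element lies in $S$, so $|S \cap T| = \sum_{k=1}^{\tau} Y_k$. By exchangeability of the draws, each $Y_k$ is marginally Bernoulli with $\Pr[Y_k = 1] = |S|/|U|$, hence $\mb{E}\,|S \cap T| = \tau |S|/|U|$, which is exactly the centering constant appearing above. Thus $|S \cap T|$ is hypergeometric with mean $\tau|S|/|U|$ and what we want is a standard-form lower-tail estimate for it.

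The key step is the concentration inequality itself. Although $Y_1, \dots, Y_\tau$ are not independent, this is precisely the setting of sampling without replacement from a finite population with values in $[0,1]$, for which Hoeffding's inequality holds with the same constants as in the independent case (Hoeffding 1963, via the convex-domination of without-replacement sums by with-replacement sums; see also Serfling). Applying it to the lower tail of $\tfrac1\tau\sum_{k=1}^\tau Y_k$, whose mean is $|S|/|U|$, yields $\Pr\!\left[\sum_{k=1}^\tau Y_k \le \tfrac{|S|\tau}{|U|} - r\tau\right] \le \exp(-2\tau r^2)$, which together with the equivalence from the first step finishes the proof.

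I expect the only genuine subtlety to be the justification that the hypergeometric distribution does not concentrate worse than the binomial, so that the exponent $2\tau r^2$ is obtained with no loss — this is exactly Hoeffding's reduction for sampling without replacement. If one preferred a self-contained argument, the draw-indicators $Y_1, \dots, Y_\tau$ form a negatively associated family (being sums over disjoint index sets of the negatively associated indicators of a uniform random permutation of $U$), so the usual moment-generating-function / Chernoff computation together with Hoeffding's lemma for bounded variables gives the same bound. Everything else reduces to the elementary identity $|S \cup T| = |S| + \tau - |S \cap T|$.
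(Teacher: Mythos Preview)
Your proposal is correct and follows essentially the same route as the paper: reduce the event to a one-sided tail for a hypergeometric count and invoke Hoeffding's bound for sampling without replacement. The only cosmetic difference is that the paper works with the complementary count $|(U\setminus S)\cap T|$ and its upper tail, whereas you work directly with $|S\cap T|$ and its lower tail; since $|S\cap T| + |(U\setminus S)\cap T| = \tau$, these are equivalent formulations of the same bound.
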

\begin{proof}
It suffices to prove a probabilistic lower bound on $|S \cap T|$.  We will instead prove a probabilistic upper bound on $|(U \setminus S) \cap T|$.\\

\noindent Let $t_1, \dots, t_\tau$ denote the elements of $T$, which are sampled without replacement from $U$.  We have $|(U \setminus S) \cap T| = \sum_{i=1}^\tau \mathbf{1}_{t_i \in U \setminus S}$.  We can think of the indicator variables $\mathbf{1}_{t_i \in U \setminus S}$ as a random sample of $\tau$ values without replacement from a population containing $|U \setminus S|$ ones and $|S|$ zeroes.  Then by a variant of the Chernoff bound for samples without replacement due to Hoeffding \cite[\S 6 and Theorem 1]{hoeffding}, for any $0 < r < 1 - \frac{|U \setminus S|}{|U|} = \frac{|S|}{|U|}$,
\[
\Pr\left[ \frac{|(U \setminus S) \cap T|}{\tau} - \frac{|U \setminus S|}{|U|} \ge r\right] \le e^{-2\tau r^2}.
\]

\noindent To complete the proof, apply the relation
\[
|S \cup T| = |S| + |T| - |S \cap T| = |S| + |T| - (|T| - |(U \setminus S) \cap T|) = |S| + |(U \setminus S) \cap T|.
\]
\end{proof}

\noindent Note that the conclusion of the lemma continues to hold if we replace $|S|$ with an upper bound on $|S|$.  Applying the lemma with $U = [\beta, n)$, $T = J$, $S = K \setminus [0, \beta)$, $r = r$, and the upper bound $|S| \le \alpha\epsilon$, we find
\begin{align*}
&\Pr\left[|J \cup (K \setminus [0, \beta))| \ge \alpha\epsilon - \frac{\alpha\epsilon(n - (\alpha + \beta) + 1)}{n - \beta} + (r + 1)(n - (\alpha + \beta) + 1)\right] \\ &\le e^{-2 (n - (\alpha + \beta) + 1)r^2} \le e^{-nr^2}.
\end{align*}
using conditions (1) and (2).\\

\noindent Assume the inner equality does not hold; this gives the probabilistic condition in the output description.  Simplifying, we have
\begin{align*}
|J \cup (K \setminus [0, \beta))|
&< \alpha\epsilon - \frac{\alpha\epsilon(n - (\alpha + \beta) + 1)}{n} + (r + 1)(n - (\alpha + \beta) + 1) \\
&< \frac{\alpha^2\epsilon}{n} + \frac{\alpha\beta\epsilon}{n} + r(n - (\alpha + \beta) + 1) + n - \alpha - \beta + 1,
\end{align*}
so
\begin{align*}
|J \cup [0, \beta) \cup K|
&< \frac{\alpha^2\epsilon}{n} + \frac{\alpha\beta\epsilon}{n} + r(n - (\alpha + \beta) + 1) + n - \alpha + 1 \\
&\le n - \delta
\end{align*}
by condition (3).  This gives a bound on the sparsity of $g(x)|_n$ in the Newton basis.  \\

\noindent For all $i \in [0, n) \setminus H$ such that $y_i = e(i)$, by the linear system defining $b(x)$ and $c(x)$, we have $g(i) = 0$.  Then by the condition $d_H(\hat{y}, (e(0), e(1), \dots, e(n-1))) \le \epsilon$, $g(x)|_n$ has at least $n - |H| - \epsilon$ roots in $[0, n)$.  Thus by Lemma \ref{lemma_sparsity}, $g(x)|_n$ is zero at the first
\[
\delta - \epsilon - |H| + 1
\]
values in $[0, n)$.  This number is nonzero by the input condition $|H| \le \delta - \epsilon$.\\

\noindent It follows that the Newton basis polynomials with indices in $[0, \delta - \epsilon - |H| + 1)$ have coefficient 0 in $g|_n(x)$.  We can then refine our upper bound for the sparsity of $g|_n(x)$: instead of $|J \cup [0, \beta) \cup K|$, we can upper bound it by
\[
|J \cup [\delta - \epsilon - |H| + 1, \beta) \cup K|.
\]
This bound is lower than our previous upper bound $n - \delta$ by $\min\{\delta - \epsilon - |H| + 1, \beta\}$.  So repeating the above argument, we find that $g|_n(x)$ is zero at the first
\[
\delta - \epsilon - |H| + 1 + \min\{\delta - \epsilon - |H| + 1, \beta\}
\]
values in $[0, n)$.  Iterating, we find that $g|_n(x)$ is zero at all values in $[0, \beta)$.\\

\noindent Let $i_0$ be the output value, i.e., the smallest value of $i \in \N$ for which $c(i) \neq 0$.  Because $\deg c(x) \le \alpha - 1$, we have $i_0 \in [0, \alpha)$.  By condition (4), $i_0 \in [0, \beta)$, hence $g(i_0) = g|_n(i_0) = 0$.  By the linear system defining $b(x)$ and $c(x)$, it follows that $e(i_0) = y_{i_0}$.  Finally, by the linear system defining $b(x)$ and $c(x)$, we have $i_0 \notin H$.  This proves the output's correctness.

\begin{mythm}\label{evaluation_error_theorem}
For any constant $c > 0$, there is an instantiation of the above algorithm and a function $\epsilon(n) = \Omega(\sqrt{n/\log(n)})$, specified in \S \ref{sec_params} above, with the following guarantee: Given an input $((\hat{z}_0,\hat{a}_0),(\hat{z}_1,\hat{a}_1),\ldots,(\hat{z}_{n-1},\hat{a}_{n-1}))$, where all $\hat{z}_i \in [-Z, Z]$ and $\hat{a}_i \in [-Z2^n, Z2^n]$, if there exists $z = (z_0, \dots, z_{n-1}) \in [-Z, Z]^n$ such that the encoding of $z$ and the input have Hamming distance at most $\epsilon(n)$, then with probability at least $1 - n^{-c}$, the output is a set of indices $I \subset [0, \alpha(n))$ such that $|I|/\alpha(n) \ge 1/2$ and such that if $i \in I$, then $\hat{z}_i = z_i$.  Here $\alpha(n)$ is some positive function of $n$.  Furthermore, the runtime of the algorithm is polynomial in $n$ and $\log(Z)$.
\end{mythm}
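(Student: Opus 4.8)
The plan is to assemble the ingredients developed in this section into a single algorithm and then bound its failure probability and runtime. The first move is to pass from $\Z$ to a finite field: pick a prime $p$ with $\max\{2^{n^2}n^{n/2},\,2Z2^n\} < p \le 2\max\{2^{n^2}n^{n/2},\,2Z2^n\}$, which exists by the prime number theorem and can be found in time polynomial in $n$ and $\log Z$ by random sampling together with AKS primality testing; here $\log p = n^{\mathcal{O}(1)}$. Since $p > 2Z2^n$, reduction modulo $p$ is injective on $[-Z2^n, Z2^n]$, so the integer decoding problem and its reduction modulo $p$ have exactly the same solutions; since $p > 2^{n^2}n^{n/2}$, Lemma~\ref{lemma_sparsity} applies and the sparsity--root bound for Newton-basis polynomials survives the reduction. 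Henceforth everything is over $\F_p$.

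Next I would exploit the $\Z$-linearity of the encoding to reduce non-error location for $(\hat z,\hat a)$ to non-error location for the single vector $\hat y$ defined in \S\ref{sec_error_location}. If $z\in[-Z,Z]^n$ has encoding within Hamming distance $\epsilon$ of the input, then the associated Newton error polynomial $e(x)$ is $\epsilon$-sparse in the Newton basis and satisfies $d_H\big((e(0),\dots,e(n-1)),\hat y\big)\le \epsilon$, and $\hat y_i = e(i)$ precisely at the indices $i$ where $\hat z_i = z_i$. So it suffices to exhibit a size-$\geq \alpha/2$ subset of $[0,\alpha)$ on which $\hat y$ agrees with $e$. Initialize $I=\emptyset$ and call the polynomial-method subroutine $\delta-\epsilon+1$ times, each time with inputs $\hat y$ and $H=I$ (legitimate since $|I|\le\delta-\epsilon$ throughout), appending the returned index to $I$. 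The correctness analysis established just above this statement shows that, conditioned on the existence of such an $e$, each call returns some $i_0\in[0,\alpha)\setminus H$ with $\hat y_{i_0}=e(i_0)$ with probability at least $1-e^{-nr^2}$; a union bound over the $\delta-\epsilon+1\le\alpha\le n$ calls gives total success probability at least $1-ne^{-nr^2}$, and $I$ terminates with $|I|=\delta-\epsilon+1$, all of whose indices are evaluation non-errors.

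It remains to instantiate the parameters and account for the runtime. I would take the assignment from \S\ref{sec_params}: $r=\sqrt{(c+1)\ln(n)/n}$, so that $ne^{-nr^2}=n^{-c}$; $\alpha=\beta=\lfloor(2+2\sqrt{c+1})\sqrt{n\ln n}\rfloor$; $\epsilon=\epsilon(n)=\lfloor \sqrt{n}/((8c+16\sqrt{c+1}+17)\sqrt{\ln n})-2\rfloor$; and $\delta=\lfloor -2\alpha^2\epsilon/n - r(n-2\alpha+1)+\alpha-1\rfloor$. One checks directly that these satisfy constraints (1)--(4) for all sufficiently large $n$ (as a function of $c$) as well as (*), namely $(\delta-\epsilon+1)/\alpha\ge 1/2$, so that $|I|/\alpha(n)\ge 1/2$, and that $\epsilon(n)=\Omega(\sqrt{n/\log n})$. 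For the runtime, each call solves an $(n+1)$-variable, $n$-constraint linear system over $\F_p$ plus does the auxiliary arithmetic of forming $\hat y$ and evaluating $c(x)$, at cost $\mathrm{poly}(n,\log p)=\mathrm{poly}(n,\log Z)$; there are at most $n$ calls, and generating $p$ is a one-time $\mathrm{poly}(n,\log Z)$ cost, yielding the claimed polynomial bound.

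The genuinely delicate step has already been dispatched in the correctness subsection above: the bootstrapping argument that repeatedly sharpens the lower bound on the number of roots of $g|_n$ in $[0,n)$ until $g|_n$ is forced to vanish on all of $[0,\beta)$, combined with the random choice of $J$ that controls $|J\cup[0,\beta)\cup K|$ via the Hoeffding-type concentration lemma and constraint (3). Within the proof of this theorem itself the only remaining work is the parameter verification of the previous paragraph: a routine but slightly fiddly inequality chase whose shape is forced by the quadratic term $\alpha^2\epsilon/n$ in (3), which pins $\alpha$ at $\Theta(\sqrt{n\ln n})$ and caps $\epsilon$ at the $\Theta(\sqrt{n/\log n})$ scale --- the source of the $\sqrt{n}$-type barrier in the main theorem.
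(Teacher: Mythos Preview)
Your proposal is correct and follows essentially the same route as the paper: the theorem is not given a standalone proof there but is simply the summary of \S\ref{sec_params}--\S\,4.3, and you have faithfully assembled those pieces --- the passage to $\F_p$, the linearity-based reduction to $\hat y$, the $\delta-\epsilon+1$ iterated calls with union bound, and the explicit parameter choice yielding $\epsilon(n)=\Omega(\sqrt{n/\log n})$ and success probability $1-n^{-c}$.
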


\subsection{Newton Non-error Location by Reversal}
Our next goal is to identify a sufficiently large set of correct Newton basis indices $I_a$.  The inversion formula $$z_i  = \sum_{j=0}^{n-1}a_j\binom{i}{j},\qquad a_j = \sum_{i=0}^{n-1} (-1)^{j-i}\binom{j}{i}z_i$$
suggests that perhaps we could use a similar approach to locating correct evaluation indices. In fact, we show that the problem of locating correct Newton basis indices can be translated so that we can apply the above algorithm unchanged. The key is that the matrix of the transformation from the evaluation to Newton bases is a nonzero row and column scaling of the transformation from the Newton to evaluation bases. While the translation is simple in light of the inversion theorem, we develop a more general duality which allows translation for a wide class of codes, including the number theoretic variants discussed in \S \ref{section_cyclotomy}--\ref{section_transcendence}. Furthermore, additive uncertainty is invariant under this scaling, meaning it could be a tool in proving the additive uncertainty of other transformations. \\

\noindent \textbf{Uncertainity Morphisms:}
Define an uncertainty problem to be a linear transformation $A: F^n \ra F^n$ over some field $F$.  We are interested in ``decoding'' pairs of vectors $(\hat{x}, \widehat{Ax})$, which are taken to be errored versions of some pair $(x, Ax)$.  For example, when $A$ is the evaluation map in the Newton basis over $\F_p$ at the points $[0, n)$, decoding pairs corresponds to decoding CHS tree codes. Define a morphism of uncertainty problems $(T, U): A \ra B$ to be a pair of linear transformations $T, U: F^n \ra F^n$ such that
\begin{itemize}
	\item $T$ and $U$ are given by invertible diagonal matrices and
	\item we have the commutative diagram
	\[
	\begindc{\commdiag}[500]
	\obj(0, 0)[00]{$F^n$}
	\obj(0, 1)[01]{$F^n$}
	\obj(1, 0)[10]{$F^n.$}
	\obj(1, 1)[11]{$F^n$}
	\mor{01}{11}{$T$}
	\mor{00}{10}{$U$}
	\mor{01}{00}{$A$}[\atright, \solidarrow]
	\mor{11}{10}{$B$}[\atleft, \solidarrow]
	\enddc
	\]
\end{itemize}
By applying the components $T$ and $U$ of such a morphism to a corrupted transform vector pair $(\hat{x}, \widehat{Ax})$ with respect to $A$, we get a corrupted transform vector pair $(T\hat{x}, U\widehat{Ax}) = (\widehat{Tx}, \widehat{BTx})$ with respect $B$ having the same error locations as the original pair.  This is true for inputs with no errors by the commutative diagram, and adding one error to an input pair $(\hat{x}, \widehat{Ax})$ adds an error to the same location in the output pair $(T\hat{x}, U\widehat{Ax})$ because $T$ and $U$ are diagonal.  \\

\noindent Now let $E: \F_p^n \ra \F_p^n$ denote the evaluation map in the Newton basis at the points $[0, n)$.  Then to ``reverse'' the algorithm from Theorem \ref{evaluation_error_theorem}, it suffices to define a morphism from $E$ to $E^{-1}$. That is, it suffices to find two invertible diagonal matrices $T_E,U_E$ such that the following diagram commutes:
\[
\begindc{\commdiag}[500]
\obj(0, 0)[00]{$\F_p^n$}
\obj(0, 1)[01]{$\F_p^n$}
\obj(1, 0)[10]{$\F_p^n.$}
\obj(1, 1)[11]{$\F_p^n$}
\mor{01}{11}{$T_E$}
\mor{00}{10}{$U_E$}
\mor{01}{00}{$E$}[\atright, \solidarrow]
\mor{11}{10}{$E^{-1}$}[\atleft, \solidarrow]
\enddc
\]
From the inversion formula, we can take both $T_E$ and to be the diagonal matrix with the diagonal entries alternating between $1$ and $-1$.\\

\noindent We then get the following algorithm to locate correct indices in the $\hat{a}_i$ instead of the $\hat{z}_i$: swap the vectors $\hat{a}$ and $\hat{z}$, apply the matrices $T_E$ and $U_E$ to them (i.e., negate every even coordinate), and then apply the algorithm from Theorem \ref{evaluation_error_theorem}.  Thus:\\

\begin{mythm}\label{newton_error_theorem}
For any constant $c > 0$, there is a function $\epsilon(n) = \Omega(\sqrt{n/\log(n)})$, identical to that in Theorem \ref{evaluation_error_theorem}, such that the algorithm in the above paragraph has the following guarantee: Given an input $((\hat{z}_0,\hat{a}_0),(\hat{z}_1,\hat{a}_1),\ldots,(\hat{z}_{n-1},\hat{a}_{n-1}))$, where all $\hat{z}_i \in [-Z, Z]$ and $\hat{a}_i \in [-Z2^n, Z2^n]$, if there exists $z = (z_0, \dots, z_{n-1}) \in [-Z, Z]^n$ such that the encoding of $z$ and the input have Hamming distance at most $\epsilon(n)$, then with probability at least $1 - n^{-c}$, the output is a set of indices $I \subset [0, \alpha(n))$ such that $|I|/\alpha(n) \ge 1/2$ and such that if $i \in I$, then $\hat{a}_i = a_i$.  Here $\alpha(n)$ is identical to that in Theorem \ref{evaluation_error_theorem}.  Furthermore, the runtime of the algorithm is polynomial in $n$ and $\log(Z)$.
\end{mythm}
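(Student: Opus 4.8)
The plan is to deduce this statement as a near-immediate corollary of Theorem~\ref{evaluation_error_theorem} together with the uncertainty morphism formalism set up just above. Write $E\colon\F_p^n\to\F_p^n$ for the Newton-basis evaluation map at the points $[0,n)$, so $E_{ij}=\binom{i}{j}$ and $E(a)=z$. The first step is to record the algebraic identity that powers the ``reversal.'' Let $D$ be the invertible diagonal matrix whose diagonal entries alternate between $1$ and $-1$ (index $0$ getting $+1$). The inversion formula $a_j=\sum_i(-1)^{j-i}\binom{j}{i}z_i$ says exactly that $(E^{-1})_{ji}=(-1)^{j-i}\binom{j}{i}=(-1)^{i+j}E_{ji}$, i.e. $E^{-1}=DED$. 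Since $D^2=I$ this also gives $E=DE^{-1}D$, $DE=E^{-1}D$, and $DE^{-1}=ED$, so $(D,D)$ is simultaneously a morphism of uncertainty problems $E\to E^{-1}$ and $E^{-1}\to E$, both components being invertible diagonal matrices as the definition demands.

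Next I would reinterpret the decoding input. The received data $((\hat z_0,\hat a_0),\dots,(\hat z_{n-1},\hat a_{n-1}))$ is a corrupted transform pair $(\hat z,\hat a)$ with respect to $E^{-1}$ (domain slot $z$, codomain slot $a=E^{-1}z$), and the indices we now wish to certify correct are precisely the coordinates of the codomain slot $\hat a$. Applying the morphism $(D,D)\colon E^{-1}\to E$ to this pair yields $(D\hat z,D\hat a)$; because morphisms of uncertainty problems preserve error locations coordinatewise and carry genuine pairs to genuine pairs, $(D\hat z,D\hat a)$ is a corrupted transform pair with respect to $E$ whose underlying true pair is $(Dz,Da)$ with $Da=E(Dz)$, and whose errors sit at the same coordinates as those of $(\hat z,\hat a)$. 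In particular, if there is a $z\in[-Z,Z]^n$ whose encoding is within Hamming distance $\epsilon(n)$ of the input, then $(D\hat z,D\hat a)$ is within the same Hamming distance of a genuine $E$-pair, now with $Da$ occupying the ``evaluation'' slot and $Dz$ the ``Newton'' slot.

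Finally I would invoke Theorem~\ref{evaluation_error_theorem}, feeding it the pair in which $D\hat a$ plays the role of the evaluations and $D\hat z$ the role of the Newton coefficients. With probability at least $1-n^{-c}$ it returns a set $I\subset[0,\alpha(n))$ with $|I|/\alpha(n)\ge 1/2$ on which the evaluation slot is uncorrupted; since $D$ is diagonal and invertible, $D\hat a$ being correct at $i$ is the same as $\hat a_i=a_i$, so $I$ is exactly the set sought. For the hypotheses of Theorem~\ref{evaluation_error_theorem} to apply I would simply rescale the bound: the entries of $D\hat a$ lie in $[-Z2^n,Z2^n]$ and those of $D\hat z$ lie in $[-Z,Z]\subseteq[-(Z2^n)2^n,(Z2^n)2^n]$, so one runs that theorem with $Z$ replaced by $Z':=Z2^n$, which turns $\log Z$ into $\log Z+n$ and hence keeps the runtime polynomial in $n$ and $\log Z$, and leaves the choice of prime in Lemma~\ref{lemma_sparsity} unchanged in order of magnitude. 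The function $\epsilon(n)$, the quantity $\alpha(n)$, and the success probability $1-n^{-c}$ are inherited verbatim. There is no real obstacle here: the one point requiring care, and it is entirely routine, is the sign-and-indexing verification that $DED$ is literally $E^{-1}$ and that the relevant square commutes on the nose; once that is checked the argument is just bookkeeping.
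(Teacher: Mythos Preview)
Your proposal is correct and follows the same route as the paper: verify from the inversion formula that $E^{-1}=DED$ with $D$ the alternating-sign diagonal, so that $(D,D)$ is an uncertainty morphism swapping the roles of evaluations and Newton coefficients, then feed the sign-flipped swapped pair into Theorem~\ref{evaluation_error_theorem}. Your extra care about the size bound (running the evaluation-side algorithm with $Z'=Z2^n$) is a detail the paper glosses over but which changes nothing of substance, and your treatment of the morphism direction is slightly more explicit than the paper's, but the argument is the same.
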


\begin{myrmk}\label{uncertainty_morphism_remark}
	A further useful consequence of a morphism $(T,U):A\rightarrow B$ is that additive uncertainty holds with respect to $A$ if and only if it holds with respect to $B$. Often it is easier to prove uncertainty for one of $A$ or $B$; we may then infer uncertainty for the other using the morphism.
\end{myrmk}

\subsection{From Error Location to Error Correction}
Using the algorithms in the previous two sections on an input $((\hat{z}_0,\hat{a}_0),(\hat{z}_1,\hat{a}_1),\ldots,(\hat{z}_{n-1},\hat{a}_{n-1}))$ which is an erroneous version of an encoding $((z_0,a_0),(z_1,a_1),\ldots,(z_{n-1},a_{n-1}))$ having at most $\epsilon(n)$ errors, we get an integer $\alpha(n)$ and subsets $I_z, I_a \subset [0, \alpha(n))$ such that $|I_z|, |I_a| \ge \alpha(n)/2$ and the input equals its corrected encoding at the $z$ indices in $I_z$ and the $a$ indices in $I_a$.  From this, we wish to derive $z_0$, the correct first coordinate of the input to the tree code. \\

\noindent If $0 \in I_z$ or $0 \in I_a$, we merely output $\hat{z}_0$ or $\hat{a}_0$, respectively.  Otherwise, as in \S \ref{sec_error_location}, it suffices to solve this problem with $\hat{z}$ and $\hat{a}$ replaced by $\hat{y}$ and $(0, 0, \dots, 0)$, where
\[
\hat{y}_i:= \hat{z}_i  - \sum_{j=0}^{n-1} \hat{a}_j\binom{j}{i}.
\]

\noindent Note that we are now faced with a polynomial interpolation problem: we want to find a polynomial $f(x)$ supported on indices $[0, n) \setminus I_a$ in the Newton basis such that $f(i) = \hat{y}_i$ for all $i \in I_z$.  To perform the interpolation, we will use the Lindstr\"om-Gessel-Viennot Lemma, as stated in Lemma \ref{gessel_viennot}.  Our algorithm is based on the proof of Lemma 1.4 in the CHS paper \cite{chs} (reproduced above as Lemma \ref{newton_uncertainty_lemma}). \\

\noindent For $i \in [0, n)$, let $t(i) = |[0, i) \cap I_z| - |[0, i) \setminus I_a|$.  For all $i$, $t(i+1) - t(i) \in \{-1, 0, 1\}$, i.e., $t(i)$ moves by unit steps.  We have $t(0) = -1$ and $t(\alpha(n)) = |I_z| - |[0, \alpha(n)) \setminus I_a| \ge \alpha(n)/2 - \alpha(n)/2 = 0$.  Thus there is a smallest index $i_0 \in [0, \alpha(n))$ such that $t(i_0) = 0$.  Then letting $0 < r_1 < r_2 < \cdots < r_d$ be the elements of $[0, i_0) \cap I_z$ and $0 = c_1 < c_2 < \cdots < c_d$ be the $d$ smallest elements of $[0, n) \setminus I_a$, we have $r_k \ge c_k$ for all $k$: otherwise, letting $r_{k_0} < c_{k_0}$ be the first counterexample, we would have $t(r_{k_0}) > 0$, contradicting the minimality of $i_0$. \\

\noindent Now the Lindstr\"om-Gessel-Viennot Lemma, as stated in Lemma \ref{gessel_viennot}, implies that there is a unique polynomial $g(x)$ supported on indices $\{c_1, c_2, \dots, c_d\}$ in the Newton basis such that $g(r_k) = \hat{y}_{r_k}$ for all $k \in [d]$.  We compute $g(x)$ by inverting the matrix $\{ \binom{r_k}{c_\ell} \mid k, \ell \in [d]\}$ and multiplying it by $(\hat{y}_{r_1}, \hat{y}_{r_2}, \dots, \hat{y}_{r_d})$.  Since $t(i_0) = 0$, we have $[0, i_0) \setminus I_a = \{c_1, \dots, c_d\}$, so $g(x)$ is in fact the unique polynomial supported on $[0, i_0) \setminus I_a$ such that $g(r_k) = \hat{y}_{r_k}$ for all $k \in [d]$.  The polynomial $\sum_{j=0}^{i_0 - 1} a_j \binom{x}{j}$ corresponding to the true Newton basis error vector $a$, restricted to $[0, i_0)$, also has this property, assuming $I_a$ and $I_z$ are correct.  Thus $g(0) = q(0) = a_0 = z_0$, so we finish the algorithm by outputting $g(0)$. \\

\begin{mythm}\label{thm_final}
For any constant $c > 0$, there is a function $\epsilon(n) = \Omega(\sqrt{n/\log(n)})$, identical to that in Theorem \ref{evaluation_error_theorem}, such that the above algorithm has the following guarantee: Given an input $((\hat{z}_0,\hat{a}_0),(\hat{z}_1,\hat{a}_1),$ $\ldots,(\hat{z}_{n-1},\hat{a}_{n-1}))$, where all $\hat{z}_i \in [-Z, Z]$ and $\hat{a}_i \in [-Z2^n, Z2^n]$, if there exists $z = (z_0, \dots, z_{n-1}) \in [-Z, Z]^n$ such that the encoding of $z$ and the input have Hamming distance at most $\epsilon(n)$, then with probability at least $1 - 2n^{-c}$, the output is $z_0$.  Furthermore, the runtime of the algorithm is polynomial in $n$ and $\log(Z)$.
\end{mythm}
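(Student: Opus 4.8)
The plan is to obtain the theorem by assembling Theorems \ref{evaluation_error_theorem} and \ref{newton_error_theorem} with the Lindstr\"om--Gessel--Viennot interpolation argument sketched in the preceding paragraphs. Fix the constant $c$ and take $\epsilon(n)$ to be the function produced by Theorem \ref{evaluation_error_theorem}; by construction it is $\Omega(\sqrt{n/\log(n)})$, and by Theorem \ref{newton_error_theorem} it is also valid for the Newton-side algorithm with the same $\alpha(n)$. First run the algorithm of Theorem \ref{evaluation_error_theorem} to obtain $I_z\subset[0,\alpha(n))$, and separately the algorithm of Theorem \ref{newton_error_theorem} to obtain $I_a\subset[0,\alpha(n))$. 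A union bound gives that, with probability at least $1-2n^{-c}$, simultaneously $|I_z|,|I_a|\ge\alpha(n)/2$, $\hat z_i=z_i$ for all $i\in I_z$, and $\hat a_i=a_i$ for all $i\in I_a$. Everything that follows is deterministic and conditions on this event.

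Next I would check that the interpolation step then returns $z_0$. If $0\in I_z$ or $0\in I_a$ we are done (output $\hat z_0$, resp.\ $\hat a_0=a_0=z_0$), so assume neither and pass, using linearity of the encoding, to the instance $(\hat y,\mathbf 0)$ with $\hat y_i=\hat z_i-\sum_j\hat a_j\binom{i}{j}$; its errors lie at a subset of the original error locations, so $I_z,I_a$ are still correct, and its true Newton-basis error polynomial $e(x)$ is $\epsilon$-sparse. Introduce the potential $t(i)=|[0,i)\cap I_z|-|[0,i)\setminus I_a|$: it moves by unit steps, has $t(0)=-1$, and $t(\alpha(n))\ge\alpha(n)/2-\alpha(n)/2=0$, so it has a first zero $i_0\le\alpha(n)$. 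Let $r_1<\cdots<r_d$ be the elements of $[0,i_0)\cap I_z$ and $c_1<\cdots<c_d$ the $d$ smallest elements of $[0,n)\setminus I_a$; minimality of $i_0$ forces $r_k\ge c_k$ for all $k$. By Lemma \ref{gessel_viennot} the integer determinant of $\{\binom{r_k}{c_\ell}\}$ is positive, and it is at most $2^{n^2}n^{n/2}<p$ by Hadamard's inequality, so this matrix is invertible over $\F_p$; hence there is a unique polynomial $g$ with Newton support $\{c_1,\dots,c_d\}=[0,i_0)\setminus I_a$ and $g(r_k)=\hat y_{r_k}$ for all $k$. The truncation of $e$ to Newton indices in $[0,i_0)$ is another such polynomial --- truncating is invisible at each $r_k<i_0$ since $\binom{r_k}{j}=0$ whenever $j>r_k$, in particular for $j\ge i_0$ --- so $g$ equals it, and undoing the linearity substitution recovers $z_0$ from $g(0)$. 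Finally, since all arithmetic is carried out modulo a prime $p>2Z2^n$ while $|z_0|\le Z$, the resulting residue lifts uniquely to the correct integer $z_0$.

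For the runtime: each of the two error-location subroutines runs in time polynomial in $n$ and $\log Z$ by Theorems \ref{evaluation_error_theorem}--\ref{newton_error_theorem}; the interpolation is one inversion of a $d\times d$ matrix over $\F_p$ with $d\le n$, i.e.\ $\mathrm{poly}(n)$ field operations, each costing $\mathrm{poly}(\log p)=\mathrm{poly}(n,\log Z)$ since $\log p=n^{O(1)}$; and selecting and certifying $p$ (by AKS) is likewise $\mathrm{poly}(n,\log Z)$. I expect the only genuinely delicate point to be the potential-function bookkeeping --- that the first zero $i_0$ of $t$ exists and produces indices with $r_k\ge c_k$, so that the Lindstr\"om--Gessel--Viennot hypotheses hold and the interpolation has a unique solution; the rest is assembling results already proved, with the minor extra care of tracking the $\F_p$ residue back to the integer answer.
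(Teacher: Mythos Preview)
Your proposal is correct and follows essentially the same approach as the paper: run the two non-error--location subroutines, union-bound their failure probabilities, then use the potential function $t(i)$ to locate an index $i_0$ at which Lindstr\"om--Gessel--Viennot guarantees an invertible interpolation system, and read off $z_0$. You have in fact filled in a few details the paper leaves implicit---the Hadamard bound ensuring the determinant is nonzero modulo $p$, the observation that truncating the error polynomial to $[0,i_0)$ does not change its evaluations at the $r_k$, and the explicit lifting of the $\F_p$ answer back to the integer $z_0$---as well as the runtime accounting.
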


After using the algorithm to find a supposed value $z_0'$ for $z_0$, we can repeat the algorithm to find a supposed value $z_1'$ for $z_1$: subtract the encoding of $(z_0', 0, 0, \dots, 0)$ from the input and remove the first coordinate before applying the algorithm again.  Note that this time, we must adjust the algorithm so that the set of evaluation points starts at 1 instead of 0, a trivial change.  Similarly, we can find supposed values $z_2', z_3', \dots, z_{n-1}'$ for $z_2, z_3, \dots, z_{n-1}$.  Note however that $z_i'$ will only be correct if there exists $z = (z_0, \dots, z_{n-1}) \in [-Z, Z]^n$ such that for all $j \in [0, i]$, $$d_H\left(TC_Z(z)_{[j, n)}, ((\hat{z}_j, \hat{a}_j), \dots, (\hat{z}_{n-1}, \hat{a}_{n-1}))\right) \le \epsilon(n-j).$$  This is a stronger condition than the condition for $z_0'$ to be correct, and indeed, it is possible for an input to decode correctly for earlier coordinates but not for later coordinates, if errors are concentrated towards the end of the input.\\

\noindent Finally, we make some observations regarding the success probability.  We can repeat the algorithm in Theorem \ref{thm_final} and take majority rule, performing two-sided amplification, to exponentially increase the probability that we get the correct value for $z_0$, with the remaining probability including the case that the algorithm fails or outputs an incorrect answer.  Alternatively, we can use the technique in the above paragraph to find a tentative correct string $(z_0', z_1', \dots, z_{n-1}')$.  If the number of errors in the input is at most $\epsilon(n)/2$, then all of $z_0', z_1', \dots, z_{n/2 -1}'$ are correct with probability at least $1 - 2n^{1-c}$, and we can check whether they are correct by verifying $d_H\left(TC_\Z(z'), ((\hat{z}_0, \hat{a}_0), (\hat{z}_1, \hat{a}_1), \dots, (\hat{z}_{n-1}, \hat{a}_{n-1})\right) \le n/2$.  This lets us perform one-sided amplification to exponentially increase the probability that we get the correct value for $z_0$, with no chance of outputting an incorrect answer. In summary, we have proven Theorem \ref{thm_intro_integer_decoding}.

\section{Decoding by Convex Optimization}\label{convex_decoding_section}
Compressed sensing had its roots in engineering practice and emerged to prominence with the breakthrough results of Candes-Tao \cite{CT06} and Donoho \cite{Don06}. The formulation convenient to our context is the ``stable recovery" result of  Candes-Romberg-Tao \cite{CRT06}, which we now briefly summarize. \\

\noindent Consider the problem of recovering an $S$-sparse $N$ dimensional vector $x\in\C^N$ from a vector of $M$ observations $y=Ax+e$ where $A$ is an $M$ by $N$ matrix with $M<<N$ and $e \in \C^M$ is an error vector with bounded $\ell_2$ norm $||e||_{\ell_2} \leq \epsilon$. One may look for $x$ using the combinatorial optimization $$\arg \min_x ||x||_{\ell_0} \ such \ that\ y=Ax+e.$$
But instead, solve for the convex optimization problem $$\arg \min_x ||x||_{\ell_1} \ such \ that\ y=Ax+e,$$ with the $\ell_0$ norm replaced with the $\ell_1$ norm. Candes and Tao \cite{CT06} formulated a sufficient condition on $A$, called the Restricted Isometry Property (RIP), that ensures that the solution $\hat{x}$ obtained by the $\ell_1$ optimization is close to the sought $x$, $$||\hat{x}-x||_{\ell_2}\leq c\epsilon$$ for some constant $c$. When there is no error, the reconstruction is exact. This RIP for $A$ is easy to state. Say the columns of $A$ are normalized to have unit $\ell_2$ norm. Then $A$ is said to satisfy RIP if there exists a constant $0\leq \delta <1$ such that for all $S$ cardinality subsets $T$ of the columns of $A$, the restriction $A_T$ of $A$ to the columns $T$ is a near isometry: $$ 1-\delta \leq \frac{||A_Tw||^2_{\ell_2}}{||w||^2_{\ell_2}}\leq 1+\delta,\ \ \ \mbox{for all } w.$$  
When $A$ is drawn from certain random ensembles, it satisfies RIP with high probability when $M = \Omega(S \log^d N)$ for some small constant $d$. For instance, $d=0$ suffices for Gaussian random matrices, $d=2$ for matrices with independent $\pm 1$ Bernoulli entries and $d=4$ when the rows are randomly drawn from a discrete Fourier matrix \cite{RV08}. \\

\noindent We next cast our decoding problem as a convex optimization problem. Call the message string $z:=(z_0,z_1,\ldots, z_{n-1})$ and its Newton basis coefficients $a:=(a_0,a_1,\ldots, a_{n-1})$. The decoder is given an erroneous version $((\hat{z}_0,\hat{a}_0),(\hat{z}_1,\hat{a}_1),\ldots,(\hat{z}_{n-1},\hat{a}_{n-1}))$ of the encoding and has to recover $z$ (or equivalently $a$).\\ 

\noindent  Write the corrupted Newton coefficients as $\hat{a}=a+v$, where $v$ is the (sparse) Newton basis error vector, and write the corrupted evaluations as $\hat{z} = z + u$, where $u$ is the (sparse) evaluation basis error vector. In light of the linearity of the encoding, we have
\[
\sum_{j=0}^{n-1} v_j\binom{i}{j} - u_i = \sum_{j=0}^{n-1} \hat{a}_j \binom{i}{j} - \hat{z}_i , \mbox{ for all } i \in  [0,n).
\]
Thus our goal is to interpolate a sparse polynomial in the Newton basis, given an erroneous version of its evaluations. Henceforth, we will use $z,a,\hat{z},\hat{a},u,v$ to denote the corresponding vectors written columnwise. Let $B$ denote the lower triangular matrix $\{\binom{i}{j} \mid i,j\in [0,n)\}$ with binomial coefficients. The decoding task can be phrased as the combinatorial optimization problem 
$$\arg\min_{u,v}\left(\left|\left|u\right|\right|_{\ell_0} + \left|\left|v\right|\right|_{\ell_0}\right), \mbox{such that}\ \hat{z}-u=B(\hat{a}-v).$$ We suggest relaxing and solving instead the convex optimization $$\arg\min_{u,v}\left(\left|\left|u\right|\right|_{\ell_1} + \left|\left|v\right|\right|_{\ell_1}\right), \mbox{such that}\ \hat{z}-u=B(\hat{a}-v).$$ 
The likeness to compressed sensing is apparent when the convex optimization is rewritten with block vectors/matrices as  
$$\arg\min_{\qbinom{u}{v}}\left|\left|\qbinom{u}{v}\right|\right|_{\ell_1}\ \mbox{such that}\ [I\ |-B]\qbinom{u}{v}=[I\ |-B]\qbinom{\hat{z}}{\hat{a}},$$ where $I$ is the identity matrix.\\ 

\noindent If $[I\ |-B]$ were to satisfy RIP, then that would allow us to correct $n/\log^d n$ errors, for some small exponent $d$. In terms of uncertainty, we could see this as an algorithmic nearly additive uncertainty principle; that is for inputs with nonzero $z_0$, 
$$Sparsity(z_0,z_1,\ldots,z_{n-1}) + Sparsity(a_0,a_1,\ldots,a_{n-1}) \geq \frac{n}{(\log n)^{O(1)}}.$$
The resulting uncertainty is weaker in that there is a small polylogarithmic loss on the right. However, it is algorithmic: the convex optimization corrects $\frac{n}{(\log n)^{O(1)}}$ errors, nearly that guaranteed by the constant distance property. Detailed accounts on the relation between error correction using convex optimization and uncertainty are in \cite{DL92,CRT062}. \\

\noindent The $\ell_1$ optimization algorithm can be adapted to erasure decoding in a straightforward manner: restrict the rows of $[I\ |-B]$ to the rows with no error. Likewise, in the general error correction task, we may use randomization to pick the a subset of rows of $[I\ |-B]$; knowing only a few rows scaling roughly with the number of errors suffices.\\

\noindent \textbf{Online RIP:} It is however not clear if $[I\ | -B]$ satisfies RIP. We cannot look to ensembles known to satisfy RIP for they are not of the form $[I\ | -A]$ where $A$ is lower triangular to enforce the online property. This raises the question as to if there are random ensembles or explicit constructions of lower triangular matrices $A$ such that $[I\ | -A]$ satisfies RIP. \\

\noindent We present candidate explicit triangular matrices using number theoretic methods.  In \S~\ref{cyclotomic_subsection}, the matrices are lower triangular matrices arising in the LU decomposition of Fourier matrices. In \S~\ref{sunflower_subsection}, they are certain transcendental matrices inspired by the arrangement of leaves on plant stems \cite{Mit77}. They arise in the LU decomposition of Fourier like matrices, but with carefully chosen transcendental numbers on the unit circle replacing roots of unity. These are extended in \S~\ref{unit_circle_subsection}, where the transcendental numbers on the unit circle are generated by a process that is pseudorandom (assuming the ABC conjecture). Further theoretical and experimental investigations are warranted to determine the viability of the  convex optimization decoding framework. \\

\noindent We also remark that one may need to formulate an alternate definition for online RIP, besides just RIP restricted to online matrices.  This is motivated by the fact that Baraniuk et~al.'s proof of RIP for Gaussian random matrices \cite{baraniuk_et_al} does not work for their natural online analog, namely, Gaussian random matrices having shape $[C\ |\ D]$ with $C$ and $D$ lower triangular and columns having expected unit norm.  Essentially, the columns that are mostly zero contribute too much variance.  It may be possible to modify Candes and Tao's proof that RIP implies $\ell_1$-decoding \cite{CT06} to account for these mostly zero columns separately, but it is not immediately obvious how to do so. \\

\noindent \textbf{Alphabet size:} Convex optimization decoding has a further interesting feature in addition to correcting a large number of errors. We only need to work with a precision that allows for the convex optimization. It is an interesting question as to if RIP implies a bound on the  precision sufficient for convex optimization. Consequently, the alphabet size could be reduced.

\section{Tree Codes from Cyclotomy}\label{section_cyclotomy}
The Lindstr\"om-Gessel-Viennot Lemma applied to path lattices plays an important role in Cohen-Haeupler-Schulman codes. We extend the Cohen-Haeupler-Schulman framework by looking to path lattices with weights drawn from carefully chosen algebraic numbers. Wrapped in the coding theory framework of \S~\ref{alphabet_reduction_section}, we obtain binary tree codes with constant distance and polylogarithmic alphabet size.

\subsection{Gaussian q-binomials}
\noindent For nonnegative integers $r,s$, recall the Gaussian or $q$-binomial coefficients
\[ \qbinom{r}{s}_q :=  \begin{cases} 
\frac{(q^r-1)(q^{r-1}-1)\ldots(q^{r-s+1}-1)}{(q^s-1)(q^{s-1}-1)\ldots(q-1)} & , r \geq s \\
\ \ \ \ 0 & , r < s 
\end{cases}
\]  
and consider the evaluation map $$z_i  = \sum_{j=0}^{n-1}b_j\qbinom{i}{j}_{q} , i \in [0,n)$$  
with Carlitz's \cite{car73}(see also \cite{chu95}[eqn 1.3]) inversion formula $$b_j  = \sum_{i=0}^{n-1}z_j(-1)^{j-i}q^{\binom{j-i}{2}}\qbinom{j}{i}_{q} , j \in [0,n).$$ 
Consider the tree code $$TC_q:\Z^n \longrightarrow (\Z\times \Z[q])^n \ \ \ \ \ \ \ \ \ \ \ \ \ \ \ \ \ \ $$
$$\ \ (z_0,z_1,\ldots,z_{n-1})\longmapsto ((z_0,b_0),(z_1,b_1),\ldots,(z_{n-1},b_{n-1})).$$
For $TC_q$ to have distance $1/2$ is equivalent to the evaluation map satisfying additive uncertainty, that is, for all $(z_0,z_1,\ldots,z_{n-1})$ with first nonzero value at index $s$, $$Sparsity(z_s,z_{s+1},\ldots,z_{n-1}) + Sparsity(b_s,b_{s+1},\ldots,b_{n-1}) \geq n-s+1.$$ This in turn is equivalent for the determinant of the square matrix $\{\qbinom{r_i}{c_j}_q \mid i, j \in [m] \}$ to not vanish for nonnegative integers $r_1<r_2,\ldots<r_m$ and $c_1<c_2<\ldots<c_m$ with $r_i \geq c_i$ for all $i \in [m]$. We look to path graphs and the Lindstr\"om-Gessel-Viennot Lemma to better understand this determinant. We first realize the underlying matrix as a path matrix of the weighted directed graph below with distinguished (not necessarily disjoint) sets of vertices $r_1,r_2,\ldots,r_m$ and $c_1,c_2\ldots,c_m$. The $r_1^{th}$ vertex on the first column is labelled $r_1$, the $r_2^{th}$ vertex on the first column is labelled $r_2$ and so on. Likewise, the $c_1^{th}$ vertex on the diagonal is labelled $c_1$, the $c_2^{th}$ vertex on the diagonal is labelled $r_2$ and so on.\\

\begin{tikzpicture}
\draw[fill] (0,8) circle (1pt);
\draw[dashed,->-](0,7)--(0,8);
\draw[->-] (0,7) -- (0,8) node[midway,sloped,left,rotate=270] {\textcolor{red}{$1$}};
\node at (0, 8.3)  {$0$};

\foreach \y in {1,2,3,4,5,6,7,8}
\foreach \x in {0,...,\y}
\draw[fill] (\x,8-\y) circle (1pt) coordinate (m,\x,\y);

\foreach \y in {1,2,3}
\foreach \x in {0,...,\y}
\draw[->-](\x,7-\y) -- (\x,8-\y);

\foreach \x in {0,1,2,3,4}
\draw[dashed,->-](\x,3)--(\x,4);

\foreach \y in {5,6}
\foreach \x in {0,...,\y}
\draw[->-](\x,7-\y) -- (\x,8-\y);

\foreach \x in {0,1,2,3,4,5,6,7}
\draw[dashed,->-](\x,0)--(\x,1);

\foreach \y in {8}
\foreach \x in {0,...,\y}
\draw[->-](\x,7-\y) -- (\x,8-\y);

\foreach \y in {1,2,3,4,5,6,7,8}
\foreach \x in {1,...,\y}
\draw[->-](\x-1,8-\y) -- (\x,8-\y);

\draw[->-] (0,-1) -- (0,0) node[midway,sloped,left,rotate=270] {\textcolor{red}{$1$}};

\draw[->-] (1,-1) -- (1,0) node[midway,sloped,left,rotate=270] {\textcolor{red}{$q$}};

\draw[->-] (2,-1) -- (2,0) node[midway,sloped,left,rotate=270] {\textcolor{red}{$q^2$}};

\draw[->-] (0,6) -- (0,7) node[midway,sloped,left,rotate=270] {\textcolor{red}{$1$}};

\draw[->-] (1,6) -- (1,7) node[midway,sloped,left,rotate=270] {\textcolor{red}{$q$}};

\draw[->-] (0,5) -- (0,6) node[midway,sloped,left,rotate=270] {\textcolor{red}{$1$}};

\draw[->-] (1,5) -- (1,6) node[midway,sloped,left,rotate=270] {\textcolor{red}{$q$}};

\draw[->-] (2,5) -- (2,6) node[midway,sloped,left,rotate=270] {\textcolor{red}{$q^2$}};

\draw[->-] (5,-1) -- (5,0) node[midway,sloped,left,rotate=270] {\textcolor{red}{$q^{c_j}$}};

\draw[->-] (8,-1) -- (8,0) node[midway,sloped,left,rotate=270] {\textcolor{red}{$q^{r_i}$}};

\node at (-0.5, 7)  {$1$}; 
\node at (-0.5, 6)  {$2$}; 
\node at (-0.5, 5)  {$r_1$}; 
\node at (-0.5, 4)  {$r_2$}; 
\node at (-0.5, 0)  {$r_i$}; 

\node at (1.5,7) {$1$}; 
\node at (2.5,6) {$c_1$}; 
\node at (3.5,5) {$c_2$}; 
\node at (4.5,4) {$4$}; 
\node at (5.5,3) {$c_j$};

\node at (10,1.5) {$r_i$};
\node at (11.5,4) {$.$};
\node at (13.2,4.2) {$c_j$};
\node at (13,2.5) {$.$};
\draw[->-](11.5,4) -- (13,4) node[midway,above]{\textcolor{red}{$1$}};
\draw[->-](13,2.5) -- (13,4) node[midway,sloped,left,rotate=270]{\textcolor{red}{$q^{c_j}$}};
\draw[dashed,->-](10.2,1.7) to[bend left] (11.5,4);
\draw[dashed,->-](10.2,1.7) to[bend right] (13,2.5);

\node at (9.8,3.5) {$P_{r_i-1,c_j-1}$};

\node at (12.4,1.4) {$P_{r_i-1,c_j}$};

\node at (10.5,5.5) {Sum of $r_i \ra c_j$ path weights, $P_{r_i,c_j} = P_{r_i-1,c_j-1}+q^{c_j}P_{r_i-1,c_j}$};

\end{tikzpicture}\\

\noindent The horizontal edges are directed from left to right and have weight $1$. The vertical edges are directed from bottom to top. The left most column of edges have weight $1$, the second left most have weight $q$, the third left most have weight $q^2$ and so on. The path matrix $M$ has entries $M_{i,j}$ equal to the sum $P_{i,j}$ of the weights of paths from $r_i$ to $c_j$. From the picture on the right, it satisfies the recurrence relation $P_{r_i,c_j} = P_{r_i-1,c_j-1}+q^{c_j}P_{r_i-1,c_j}$, which is identical to the recurrence $$\qbinom{r_i}{c_j}_q = \qbinom{r_i-1}{c_j-1}_q +q^{c_j}\qbinom{r_i-1}{c_j}_q.$$ The boundary conditions agree too; $P_{r_i,0}=\qbinom{r_1}{0}_q=1$ and if $r_i=c_i$, $P_{r_i,c_i} = \qbinom{r_i}{c_i}_q=1$. Hence $M_{i,j} = \qbinom{r_i}{c_j}_q$. Curiously, it is now evident from the path matrix formulation that the $q$-binomials are indeed polynomials in $q$, not merely rational functions as suggested by the definition. By the Lindstr\"om-Gessel-Viennot Lemma, the determinant of the path matrix is $$\det(M) = \sum_{\substack{vertex\ disjoint\\ path\ systems\ \mathcal{P}}}  \prod_{P_i\in \mathcal{P}} w(P)$$
where the geometry enforces every vertex disjoint path system $\mathcal{P}$ to consist of $m$ vertex disjoint paths $\{P_i \mid i \in [m]\}$ with $P_i$ connecting $r_i$ to $c_i$. Consequently $\det(M)$ is a polynomial in $q$ with nonnegative integer coefficients. Furthermore, it is a monic polynomial with the highest degree monomial contributed by the path system where each $P_i$ starting from $r_i$ goes right for $c_i$ steps and then turns up. The degree of this term is $\sum_{i=1}^{m}c_i(r_i-c_i)$. In summary, we may claim Lemma \ref{q-binomial_lemma}, where the bound on the sum of coefficients follows by the $q$-binomial degenerating to the binomial at $q=1$.

\begin{mylemma}\label{q-binomial_lemma}
	For nonnegative integers $r_1<r_2<\ldots<r_m$ and $c_1<c_2<\ldots<c_m$ with $r_i \geq c_i$ for all $i \in [m]$, 	$$\det\left(\left\{\qbinom{r_i}{c_j}_q \mid i, j \in [m] \right\}\right) = q^{d}+ \sum_{k<d}w_kq^k \in\Z[q]$$ where $d:=\sum_{i=1}^{m}c_i(r_i-c_i)$, $w_k \geq 0$ and $\sum_{k<d}w_k = det\left(\left\{\binom{r_i}{c_j} \mid i, j \in [m] \right\}\right)-1$. 
\end{mylemma}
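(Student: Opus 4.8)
The plan is to read all three assertions off the Lindstr\"om--Gessel--Viennot (LGV) expansion applied to the weighted lattice already introduced above. First I would make the path model precise. Let $G$ be the directed acyclic graph of the figure: the same underlying lattice as in Lemma~\ref{gessel_viennot}, with horizontal edges of weight $1$ and with each vertical edge in the $k$-th column (columns indexed from $0$) given weight $q^{k}$, together with the distinguished vertex sets $\{r_1,\dots,r_m\}$ and $\{c_1,\dots,c_m\}$. For $i,j \in [m]$ let $P_{r_i,c_j}$ be the sum of the weights $w(P)$ over all directed paths $P\colon r_i \to c_j$. Classifying each such path by its last edge gives the recurrence $P_{r_i,c_j} = P_{r_i-1,c_j-1} + q^{c_j}P_{r_i-1,c_j}$ shown in the figure, and the boundary values match those of $\qbinom{\cdot}{\cdot}_q$; since these are exactly the defining recurrence and initial conditions of the Gaussian binomial, the path matrix of $G$ is $M = \{\qbinom{r_i}{c_j}_q \mid i,j \in [m]\}$. (Incidentally this re-derives that $\qbinom{r}{s}_q \in \Z_{\ge 0}[q]$.)

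Next I would invoke LGV. Exactly as in the proof of Lemma~\ref{gessel_viennot}, planarity forces every vertex-disjoint path system $\mathcal{P}$ of $G$ to realize the identity permutation, so $\sgn(\mathcal{P}) = 1$, and to consist of $m$ monotone paths $P_i\colon r_i \to c_i$. Hence
\[
\det(M) \;=\; \sum_{\mathcal{P}} w(\mathcal{P}) \;=\; \sum_{\mathcal{P}} q^{\,e(\mathcal{P})}, \qquad e(\mathcal{P}) := \sum_{i=1}^{m} e(P_i),
\]
the sum running over vertex-disjoint path systems $\mathcal{P}$, where $e(P_i)$ is the sum of the column indices of the vertical edges of $P_i$. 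There is no sign cancellation, so this is manifestly an element of $\Z_{\ge 0}[q]$; writing it as $\sum_k w_k q^k$, each $w_k$ counts the vertex-disjoint path systems with $e(\mathcal{P})=k$, whence $w_k \ge 0$ and $\sum_k w_k$ is the total number of vertex-disjoint path systems. Specializing $q = 1$ (equivalently, applying Lemma~\ref{gessel_viennot}) identifies that total with $\det\!\left(\{\binom{r_i}{c_j} \mid i,j \in [m]\}\right)$.

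Finally I would isolate the leading term. A monotone path $P_i\colon r_i \to c_i$ traverses exactly $c_i$ horizontal and $r_i - c_i$ vertical edges, and each of its vertical edges lies in one of the columns $0,1,\dots,c_i$; therefore $e(P_i) \le c_i(r_i - c_i)$, with equality precisely when all $r_i - c_i$ vertical edges lie in column $c_i$, i.e.\ when $P_i$ goes right $c_i$ steps and then straight up --- call this path $P_i^{\max}$. Summing, $e(\mathcal{P}) \le d := \sum_{i=1}^m c_i(r_i - c_i)$ for every path system, with equality iff $P_i = P_i^{\max}$ for every $i$. The system $(P_1^{\max},\dots,P_m^{\max})$ is exactly the explicit vertex-disjoint path system exhibited in the proof of Lemma~\ref{gessel_viennot} (this is the one place the hypothesis $r_i \ge c_i$ enters). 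So there is a unique vertex-disjoint path system of weight $q^d$ and every other has weight of degree $< d$; thus $w_d = 1$, $\det(M) = q^d + \sum_{k<d} w_k q^k$, and $\sum_{k<d} w_k = \left(\sum_k w_k\right) - 1 = \det\!\left(\{\binom{r_i}{c_j} \mid i,j \in [m]\}\right) - 1$.

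\textbf{Main obstacle.} None of the steps is deep; the only delicate point is the geometric bookkeeping in the third paragraph --- that $P_i$ has $r_i - c_i$ vertical edges all confined to columns $\le c_i$ (so the ``maximal Young diagram in a box'' heuristic is exact), and that the resulting extremal system $(P_i^{\max})$ is simultaneously the unique exponent-maximizer \emph{and} vertex-disjoint. The latter is not automatic for an arbitrary choice of monotone paths, but this particular choice is precisely the witness already used to establish non-emptiness in Lemma~\ref{gessel_viennot}, so it costs nothing.
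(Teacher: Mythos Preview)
Your proposal is correct and follows essentially the same approach as the paper: the paper likewise realizes $M$ as the path matrix of the $q$-weighted lattice via the recurrence, applies LGV with the planarity-forced identity permutation to get $\det(M)\in\Z_{\ge 0}[q]$, identifies the leading monomial $q^d$ from the ``go right $c_i$ steps, then up'' path system, and reads off $\sum_k w_k$ by specializing $q=1$. If anything, your write-up is slightly more explicit than the paper's in justifying the uniqueness of the extremal system and its vertex-disjointness.
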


\subsection{Tree codes from cyclotomic units}\label{cyclotomic_subsection}
For the first number theoretic construction of tree codes of depth $n$, choose a prime number $\ell > n^3$ and substitute $q=\zeta_\ell \in \C$ for a primitive $\ell^{th}$ root of unity and consider $TC_{\zeta_\ell}$. Such a prime $\ell$ in the interval $n^3 <\ell \leq 2n^3$ can be found deterministically by exhaustively searching using Bertrand's postulate to gurantee existence and the AKS algorithm \cite{AKS04} to test for primality. The requirement that $\ell$ be prime is for ease of exposition. If $\ell$ were not prime, it suffices to assume that the Euler totient $\phi(\ell) \geq n^3$. 

\begin{mythm}\label{tree_codes_cyclotomic_theorem}
	Fix a positive integer $n$ and a prime $\ell>n^3$. The tree code $TC_{\zeta_\ell}:\Z^n \longrightarrow (\Z\times \Z[\zeta_\ell])^n$ has distance $1/2$.
\end{mythm}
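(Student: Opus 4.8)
The plan is to route the distance claim through the two equivalences already recorded above: $TC_{\zeta_\ell}$ has distance $1/2$ if and only if the $q$-binomial evaluation map $z\mapsto b$ satisfies additive uncertainty, and this in turn holds if and only if $\det\left(\left\{\qbinom{r_i}{c_j}_{\zeta_\ell} \mid i,j\in[m]\right\}\right)\neq 0$ for every choice of integers $0\le r_1<\dots<r_m<n$ and $0\le c_1<\dots<c_m<n$ with $r_i\ge c_i$ for all $i$. (Onlineness of $TC_{\zeta_\ell}$ is immediate, since $\qbinom{i}{j}_q=0$ for $j>i$, so $b_j$ depends only on $z_0,\dots,z_j$.) So it suffices to establish that nonvanishing, which I would do by a degree count against the degree of $\zeta_\ell$ over $\Q$.

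Fix such $r_i$ and $c_j$ and put $f(q):=\det\left(\left\{\qbinom{r_i}{c_j}_q \mid i,j\in[m]\right\}\right)$. By Lemma \ref{q-binomial_lemma}, $f\in\Z[q]$ is \emph{monic} of degree $d=\sum_{i=1}^{m}c_i(r_i-c_i)$; in particular $f$ is a nonzero polynomial. The next step is to bound $d$: since $0\le c_i\le r_i<n$ we have $c_i(r_i-c_i)\le (r_i/2)^2<n^2/4$, and there are at most $m\le n$ summands, so $d<n^3/4$. Because $\ell$ is prime, the minimal polynomial of $\zeta_\ell$ over $\Q$ is the cyclotomic polynomial $\Phi_\ell(x)=1+x+\dots+x^{\ell-1}$, of degree $\ell-1\ge n^3$; equivalently $\Z[\zeta_\ell]\cong\Z[x]/(\Phi_\ell(x))$, so $f(\zeta_\ell)=0$ would force $\Phi_\ell(x)\mid f(x)$ in $\Z[x]$. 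That is impossible for a nonzero $f$ with $\deg f=d<n^3/4<\ell-1=\deg\Phi_\ell$. Hence $f(\zeta_\ell)\neq 0$; letting $r_i,c_j$ range over all admissible choices yields the additive uncertainty, and hence the distance $1/2$.

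Once Lemma \ref{q-binomial_lemma} is in hand there is no real obstacle left: the monicity asserted there is precisely what upgrades ``the matrix entries are cyclotomic units'' into ``the relevant submatrix determinants are nonzero algebraic integers''. The only point needing care is the degree estimate $d<\ell-1$, and this is exactly what forces the choice $\ell=\Theta(n^3)$, since these submatrix determinants genuinely can have degree of order $n^3$, so a substantially smaller $\ell$ would not guarantee $\deg f<\deg\Phi_\ell$. The same argument works verbatim without primality of $\ell$, using $\phi(\ell)$ in place of $\ell-1$ (a primitive $\ell$-th root of unity has minimal polynomial $\Phi_\ell$ of degree $\phi(\ell)$), which is why the hypothesis may be relaxed to $\phi(\ell)\ge n^3$.
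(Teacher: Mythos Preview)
Your proof is correct and follows essentially the same route as the paper: reduce to nonvanishing of the submatrix determinants, invoke Lemma~\ref{q-binomial_lemma} to see the determinant is a monic integer polynomial in $q$ of degree $d<n^3$, and conclude $f(\zeta_\ell)\neq 0$ since $\deg f<\ell-1=[\Q(\zeta_\ell):\Q]$. Your degree bound $d<n^3/4$ via $c_i(r_i-c_i)\le (r_i/2)^2$ is slightly sharper than the paper's, and your remark on relaxing to $\phi(\ell)\ge n^3$ matches the paper's aside.
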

\begin{proof}
	Consider nonnegative integers $r_1<r_2<\ldots<r_m$ and $c_1<c_2<\ldots<c_m$ with $c_i \le r_i \le n$, for all $i \in [m]$. The minimal polynomial of $\zeta_\ell$ over $\Q$ has degree $\ell-1 \geq n^3$.  By Lemma \ref{q-binomial_lemma}, the determinant $$det\left(\left\{\qbinom{r_i}{c_j}_{\zeta_\ell} \mid i, j \in [m] \right\}\right)$$ is a monic integer polynomial in $\zeta_\ell$ of degree less than $n^3$ and is thus nonzero.
\end{proof}

\noindent\textbf{Alphabet size bounds:} The alphabet of the binary tree codes resulting from wrapping $TC_{\zeta_\ell}$ in the coding theory machinery will depend on how the size of the coefficients $b_0,b_1,\ldots,b_{n-1} \in\Z[\zeta_\ell]$ scale. We must choose a concrete representation of $\Z[\zeta_\ell]$ before discussing the size of the coefficients. Incidentally $\Z[\zeta_\ell]$ is the entire ring of integers $\mathcal{O}_\ell$ of the cyclotomic field $\Q(\zeta_\ell)$. One natural choice is the integer basis $\{1,\zeta_\ell,\ldots,\zeta_\ell^{\ell-1}\}$. In this case, it is not hard to show that the bit size of the coefficient $b_j, j\in [n]$ is bounded by a small polynomial in $n$. Hence this results in binary tree codes with polylogarithmic alphabet size, as was the case with CHS codes.

\noindent \textbf{Cyclotomic Units:} One curious observation about the tree codes $TC_{\zeta_\ell}$ is that the transformation matrices $\left\{\qbinom{i}{j}_{\zeta_\ell} \mid i,j\right\}$ and $\left\{(-1)^{j-i}\zeta_\ell^{\binom{j-i}{2}}\qbinom{i}{j}_{\zeta_\ell} \mid i,j\right\}$ have nonzero entries that are cyclotomic units. For $k \in [0,\ell)$, the elements 
$$ \epsilon_k:=\frac{1-\zeta_\ell^k}{1-\zeta_\ell} = 1+\zeta_\ell+\ldots+\zeta_\ell^{k-1} \ \in \mathcal{O}_\ell$$
are in fact units, that is $\epsilon_k,\epsilon_k^{-1}\in\mathcal{O}_\ell$. Along with the root of unity $\zeta_\ell$, they generate a group of units $\langle\{\zeta_\ell\} \cup \{\epsilon_k \mid k \in [0,\ell)\}\rangle$ called the cyclotomic units. We may rewrite the definition of $q$-binomial coefficients for $r \geq s$
as $$\qbinom{r}{s}_{\zeta_\ell} = \frac{\epsilon_r \epsilon_{r-1}\ldots \epsilon_{r-s+1}}{\epsilon_s \epsilon_{s-1}\ldots \epsilon_1}, $$ which is a cyclotomic unit. Looking for analogies between $TC_{\zeta_\ell}$ and CHS codes, cyclotomic units $\epsilon_k$ play the role of positive integers $k$. From $$\epsilon_k:=\frac{1-\zeta_\ell^k}{1-\zeta_\ell} = \frac{\sin \frac{\pi k}{\ell}}{\sin \frac{\pi}{\ell}}  \zeta_\ell^{-(k-1)(\ell+1)/2},$$ since $k$ is much smaller than $\ell$, $\epsilon_k$ is roughly $k$ times a root of unity. \\ 

\noindent Another interesting observation is that the determinants appearing in the proof of Theorem \ref{tree_codes_cyclotomic_theorem} are nonzero algebraic integers. Trivially, their norm down to the integer has absolute value $1$. To ensure the nonvanishing of the determinants, it thus suffices to represent $\Z[\zeta_\ell]$ in a finite precision big enough to resolve this. The resulting binary tree codes will have the same tradeoff as with the $\{1,\zeta_\ell,\ldots,\zeta_\ell^{\ell-1}\}$ basis representation.

\begin{myrmk}\label{chebotarev_remark}
	Chebotar\"ev showed that for a prime $\ell$ and a primitive $\ell^{th}$ root of unity $\zeta_\ell \in \C$, every square sub-matrix of the Fourier matrix $\{\zeta_\ell^{ij}\mid i,j \in [0,\ell)\}$ has nonzero determinant. His theorem was rediscovered many times, with an account of his original proof in \cite{SL96}. Tao \cite{tao03} gave a new proof and further observed that it implied  additive uncertainty. However the Fourier matrix is not lower triangular, meaning the transformation is not online. Our matrix $\{\qbinom{i}{j}_{\zeta_\ell}\mid i,j \in [0,n)\}$ appearing in the construction of tree codes $TC_{\zeta_\ell}$ appears as the lower triangular matrix in the LU decomposition of the Fourier matrix \cite{OP00}. Theorem \ref{tree_codes_cyclotomic_theorem} could be seen as an online version of Chebotar\"ev's theorem.
\end{myrmk}

\begin{myrmk}\label{hyperinvertible_remark}
 	In cryptography, square matrices all of whose square sub-matrices have nonvanishing determinant are called hyperinvertible. Beerliov\'a-Trub\'iniov\'a and Hirt \cite[\S~3.2]{BH08} constructed $n$ by $n$ hyperinvertible matrices over small finite fields (requiring only $2n$ elements). Hyperinvertibility implies additive uncertainty. It is remarkable that their construction achieves additive uncertainty over such small fields. Their matrices are however not lower triangular. A lower triangular analogue of their construction would yield binary tree codes of constant distance and alphabet size.  
 	Their construction is as follows. Take a field $\F$ with $2n$ distinct elements $\alpha_1,\alpha_2,\ldots,\alpha_n,\beta_1,\beta_2,\ldots,\beta_n$. Consider the linear transformation that takes $(x_1,x_2,\ldots,x_n)\in\F^n$ to $(y_1,y_2,\ldots,y_n)\in\F^n$ as follows. Interpolate a polynomial $g(x)\in\F[x]$ such that $g(\alpha_j)=x_j, \forall j$ and evaluate it at the $\beta$'s to yield $y_i = g(\beta_i), \forall i$. From Lagrange interpolation, the matrix of the transformation takes the form $$M:=\left\{\prod_{k\neq j}\frac{\beta_i-\alpha_k}{\alpha_j-\alpha_k}, i,j\right\}.$$  
  	Take an arbitrary square submatrix $M_{I,J}$ with row and column index sets $I$ and $J$. Being square, $M_{I,J}$ is invertible if the corresponding linear transformation is surjective. We claim this surjection by showing that every target $\{y_i,i \in I\}$ is hit. Equivalently, there is an $(x_1,x_2,\ldots,x_n)$ such that $x_j = 0, \forall j \notin J$ that $M$ maps to a vector that agrees with the target $\{y_i\}$ for $i \in I$. This is evident since there is a degree less than $n$ polynomial $g(x)$ with $g^\prime(\alpha_j)= 0, \forall j \notin J$ and $g^\prime(\beta_i)=y_i, i\in I$. 
 	\end{myrmk}

\section{Tree codes from Transcendence}\label{section_transcendence}
The Lindstr\"om-Gessel-Viennot applied to path lattices plays an important role in Cohen-Haeupler-Schulman codes. We extend the Cohen-Haeupler-Schulman framework by looking to path lattices with weights drawn from carefully chosen transcendental numbers. The first construction is inspired by phyllotaxis \cite{Mit77}, the arrangement of leaves on tree stems. The second construction has several interesting pseudorandom properties, assuming the ABC conjecture. Wrapped in the coding theory framework of \S~\ref{alphabet_reduction_section}, both constructions yield binary tree codes with constant distance and polylogarithmic alphabet size. 

\subsection{Tree codes and Sunflowers}\label{sunflower_subsection}
Let $\theta$ be an irrational algebraic real number and consider the $n$ integer multiples  $$S_{\theta,n}:=\{i\theta \mod 1 \mid i \in [0,n)\}$$ of $\theta$ modulo $1$. Sort and relabel $S_{\theta,n}$ to give $\{s_0,s_1,\ldots,s_{n-1}\}$ where $0=s_0<s_1<\ldots<s_{n-1}$ with $<$ coming from the ordering on the natural lift of the unit interval $\R/\Z$ to $\R$. Steinhaus conjectured that there are at most three gaps and when there are three, the largest gap equals the sum of the other two. That is, when there are three gaps,    $$\{s_i-s_{i-1} \mid i \in [0,n)\} = \{g^{\theta}_{n,1},g^{\theta}_{n,2},g^{\theta}_{n,3}\}$$ where $g^{\theta}_{n,1}\leq g^{\theta}_{n,2} \leq g^{\theta}_{n,3}$ and $g^{\theta}_{n,3} = g^{\theta}_{n,1} + g^{\theta}_{n,2}$. Steinhaus's conjecture was proven by S\'os \cite{sos58} and is commonly referred to as the three gap theorem. It is remarkable that there are only three gaps, irrespective of $n$.\\

\noindent Through the exponential map from $\R/\Z$ to the unit circle, $S_{\theta,n}$ yields a sequence of points on the unit circle. In the previous cyclotomic construction, we substituted for $q$ a root of unity. Now, we will substitute for $q$ the point $e^{2\pi \iota \theta}$ on the unit circle chosen by the angle $\theta$. We use $\iota$ to denote  a square root of $-1$, to not confuse with index $i$. One particular choice for $\theta$, seemingly common in flowering plants (Fibonacci phyllotaxis) \cite{Mit77} is the golden section $\alpha = (\sqrt{5}-1)/2$. van Ravenstein \cite{rav89} proved that this choice $\theta=\alpha$ maximizes the smallest gap $g^\theta_{n,1}$. Further, $g^\alpha_{n,1} = \Omega(1/n)$, meaning the points $\{e^{2 \pi i \iota \alpha} , i \in [0,n)\}$ on the unit circle are well separated. We present our tree codes for arbitrary $\theta$, but our construction is best imagined for the golden section choice $\theta=\alpha$.\\  

\noindent For an irreducible algebraic real $\theta$, from the evaluation map $$z_i  = \sum_{j=0}^{n-1}f_j\qbinom{i}{j}_{e^{2\pi \iota \theta}} , i \in [0,n)$$  
build the tree code $$TC_\theta:\Z^n \longrightarrow (\Z\times \Z[e^{2\pi \iota \theta}])^n \ \ \ \ \ \ \ \ \ \ \ \ \ \ $$
$$\ \ (z_0,z_1,\ldots,z_{n-1})\longmapsto ((z_0,f_0),(z_1,f_1),\ldots,(z_{n-1},f_{n-1})).$$

\begin{mylemma}\label{sunflower_lemma}
	Fix an irrational algebraic real number $\theta$. For every positive integer $n$, the tree code $TC_{\theta}:\Z^n \longrightarrow (\Z\times \Z[e^{2\pi \iota \theta}])^n$ has distance $1/2$. 
\end{mylemma}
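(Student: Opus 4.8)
The plan is to run exactly the same argument as in the cyclotomic case (Theorem~\ref{tree_codes_cyclotomic_theorem}), with the ``algebraic number of bounded degree'' step replaced by a transcendence step. First I would invoke the chain of equivalences recorded in \S\ref{section_cyclotomy}: the tree code $TC_\theta$ has distance $1/2$ if and only if the evaluation map $z_i = \sum_j f_j \qbinom{i}{j}_{e^{2\pi\iota\theta}}$ satisfies additive uncertainty, and this in turn holds if and only if, for all nonnegative integers $r_1 < \cdots < r_m$ and $c_1 < \cdots < c_m$ with $r_i \ge c_i$ for all $i$, the determinant of the matrix $\{\qbinom{r_i}{c_j}_{e^{2\pi\iota\theta}} \mid i,j \in [m]\}$ is nonzero. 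By Lemma~\ref{q-binomial_lemma}, the polynomial $D(q) := \det(\{\qbinom{r_i}{c_j}_q \mid i,j\in[m]\})$ lies in $\Z[q]$ and is monic of degree $\sum_i c_i(r_i - c_i)$; in particular it is a nonzero element of $\Z[q]$. So it suffices to show that $q = e^{2\pi\iota\theta}$ is the root of no nonzero polynomial with integer coefficients, i.e., that $e^{2\pi\iota\theta}$ is transcendental.

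For that I would appeal to the Gelfond--Schneider theorem: for algebraic $a \notin \{0,1\}$ and algebraic irrational $b$, the number $a^b := e^{b\log a}$ is transcendental for every choice of branch of $\log a$. Taking $a = -1$, $b = 2\theta$ (legitimate, since $\theta$ algebraic irrational forces $2\theta$ algebraic irrational), and $\log(-1) = \iota\pi$, this yields that $(-1)^{2\theta} = e^{2\iota\pi\theta}$ is transcendental; alternatively, take $b=\theta$ to get $e^{\iota\pi\theta}$ transcendental and note that its square $e^{2\iota\pi\theta}$ cannot then be algebraic. Since a transcendental number annihilates no nonzero polynomial over $\Q$, we get $D(e^{2\pi\iota\theta}) \ne 0$ for every admissible choice of the $r_i$ and $c_j$, which is precisely the required nonvanishing.

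The argument is short, and the real work has already been outsourced: the combinatorial heavy lifting — realizing the $q$-binomial submatrix as a weighted lattice-path matrix and running Lindstr\"om--Gessel--Viennot to conclude that its determinant is a nonzero monic polynomial in $q$ — is done in Lemma~\ref{q-binomial_lemma}, so the only ``obstacle'' is identifying the correct transcendence input, namely that $e^{2\pi\iota\theta}$ avoids the roots of every nonzero integer polynomial, which is immediate from Gelfond--Schneider once one picks the base $-1$. I would emphasize two points: unlike the cyclotomic case, where a crude bound on the degree of a root of unity sufficed, here one genuinely uses that $q$ is transcendental rather than merely algebraic; and this exact statement will no longer suffice once one wants tree codes over finite alphabets, where $e^{2\pi\iota\theta}$ must be truncated to finite precision, so this lemma will have to be upgraded to an effective form giving an explicit lower bound on $|D(q)|$ in terms of the height and degree of $\theta$ and the parameter $m$ — which is where the effective Lindemann--Weierstrass / Baker-type estimates enter.
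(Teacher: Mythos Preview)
Your proof is correct and follows essentially the same route as the paper: reduce to the nonvanishing of the $q$-binomial minors, invoke Lemma~\ref{q-binomial_lemma} to see the determinant is a nonzero monic integer polynomial in $q$, and conclude by transcendence of $e^{2\pi\iota\theta}$. The paper cites Lindemann--Weierstrass rather than Gelfond--Schneider for the transcendence step; your attribution is the more apt one here, since $2\pi\iota\theta$ is itself transcendental and so Lindemann--Weierstrass does not apply directly.
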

\begin{proof}
	Consider nonnegative integers $r_1<r_2<\ldots<r_m$ and $c_1<c_2<\ldots<c_m$ with $n\geq r_i \geq c_i$ for all $i \in [m]$. By Lemma \ref{q-binomial_lemma}, the determinant $$det\left(\left\{\qbinom{r_i}{c_j}_{e^{2\pi \iota \theta}}\mid i,j\in [m]\right\}\right)$$ is a monic integer polynomial in $e^{2\pi \iota \theta}$. By the Lindemann-Weierstrass theorem, $e^{2\pi \iota \theta}$ is transcendental. The  determinant, which is a nontrivial algebraic expression in $e^{2\pi \iota \theta}$, hence does not vanish.  
\end{proof}

\noindent\textbf{Alphabet size bounds:} Again, the alphabet of the binary tree codes resulting from wrapping $TC_{\theta}$ in the coding theory machinery will depend on how the size of the coefficients $f_0,f_1,\ldots,f_{n-1} \in\Z[e^{2\pi \iota \theta}]$ scale. We must choose a concrete representation of $\Z[e^{2\pi \iota \theta}]$ before discussing the size of the coefficients. We will represent the complex numbers involved with a finite precision of $p_n$ bits that scales with $n$. We next show that with precision $p_n$ polynomial in $n$, the tree codes $TC_{\theta}$ still have distance $1/2$. \\

\begin{mythm}\label{sunflower_theorem}
	Fix an irrational algebraic real number $\theta$. For every positive integer $n$, the tree code $TC_{\theta}:\Z^n \longrightarrow (\Z\times \Z[e^{2\pi \iota \theta}])^n$ with coefficients $f_0,f_1,\ldots,f_n$ represented with $\Theta(n^8)$ bit precision has distance $1/2$. 
\end{mythm}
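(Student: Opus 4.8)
The plan is to make quantitative the non-vanishing argument of Lemma \ref{sunflower_lemma}. By Lemma \ref{q-binomial_lemma}, for any admissible index sets $r_1 < \cdots < r_m$, $c_1 < \cdots < c_m$ with $n \ge r_i \ge c_i$, the determinant $D(q) := \det\left(\left\{\qbinom{r_i}{c_j}_q \mid i,j \in [m]\right\}\right)$ is a monic polynomial in $\Z[q]$ of degree $d = \sum_i c_i(r_i - c_i) \le n^3$, with nonnegative coefficients summing to $\det\left(\left\{\binom{r_i}{c_j}\right\}\right) \le 2^{n^2}n^{n/2}$ (the Hadamard bound used in Lemma \ref{lemma_sparsity}). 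Thus at $q = e^{2\pi\iota\theta}$ we are evaluating a monic integer polynomial of controlled degree and coefficient size at a transcendental point, and Lemma \ref{sunflower_lemma} tells us this is nonzero. The task is to lower bound $|D(e^{2\pi\iota\theta})|$ away from $0$, so that an approximation $\tilde q$ to $e^{2\pi\iota\theta}$ within $2^{-p_n}$ still gives $D(\tilde q) \neq 0$, and moreover so that the approximate determinant computed with the $f_j$'s (themselves computed from the approximate $\tilde q$) remains nonzero — which is what guarantees the distance-$1/2$ property via the additive uncertainty equivalence stated before Lemma \ref{q-binomial_lemma}.

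The main step is the transcendence estimate. I would invoke an effective (quantitative) form of the Lindemann--Weierstrass theorem: since $2\pi\iota\theta$ is a nonzero algebraic multiple of $\iota$... more precisely, since $\theta$ is a fixed nonzero algebraic real, $e^{2\pi\iota\theta}$ is transcendental, and effective results (e.g. of Baker type, or the explicit measures of algebraic independence / transcendence measures for $e^\beta$ with $\beta$ algebraic, as in Mahler or Waldschmidt) give a bound of the form: for every nonzero $P \in \Z[q]$ of degree $\le d$ and height (max coefficient absolute value) $\le H$, one has $|P(e^{2\pi\iota\theta})| \ge C(\theta, d)^{-1} H^{-\kappa(\theta,d)}$ where $C$ and $\kappa$ are explicit and depend polynomially — after taking logarithms — on $d$ and $\log H$. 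Plugging in $d \le n^3$ and $\log H = O(n^2)$, this yields $-\log|D(e^{2\pi\iota\theta})| = \mathrm{poly}(n)$; a careful bookkeeping of the exponents in the effective Lindemann--Weierstrass statement is what produces the $\Theta(n^8)$ figure (the precise power is whatever the chosen effective theorem delivers; $n^3$ for degree times $n^2$ for log-height times a small polynomial overhead from the transcendence measure is consistent with $n^8$).

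Given the lower bound $|D(e^{2\pi\iota\theta})| \ge 2^{-cn^8/2}$ say, the remainder is routine perturbation analysis. First, $e^{2\pi\iota\theta}$ can itself be computed to $p_n = \Theta(n^8)$ bits in time polynomial in $n$. Second, the inversion formula for $f_j$ (Carlitz's formula, with entries $q^{\binom{j-i}{2}}\qbinom{j}{i}_q$) expresses each $f_j$ as a polynomial in $q$ of degree $O(n^2)$ with integer coefficients of size $2^{O(n^2)}$ applied to the integer inputs $z_i \in [-Z,Z]$; a standard Lipschitz/propagation-of-error estimate shows that replacing $q$ by $\tilde q$ within $2^{-p_n}$ perturbs each entry of the matrix $\{\qbinom{r_i}{c_j}_{\tilde q}\}$, hence (via multilinearity of the determinant and the Hadamard bound on cofactors) perturbs $D$ by at most $2^{O(n^3)}2^{-p_n}$. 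Choosing the hidden constant in $p_n = \Theta(n^8)$ large enough makes this perturbation smaller than $|D(e^{2\pi\iota\theta})|$, so the approximate determinant is still nonzero for every admissible submatrix; by the equivalence recalled before Lemma \ref{q-binomial_lemma} this is exactly the additive uncertainty, hence distance $1/2$, for the finite-precision code.

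The hard part is extracting a clean, fully explicit transcendence measure for $e^{2\pi\iota\theta}$ with $\theta$ algebraic that has the right polynomial dependence on the degree and log-height of the test polynomial; the subsequent perturbation bookkeeping and the verification that $\Theta(n^8)$ bits suffice are mechanical once that estimate is in hand. One subtlety to watch is that $D$ is evaluated at a complex number on the unit circle rather than at a real algebraic number, so one should phrase the effective Lindemann--Weierstrass input in the form handling $e^\beta$ for $\beta$ a (complex) algebraic number, which is the standard formulation and causes no real difficulty.
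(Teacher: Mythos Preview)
Your proposal is correct and follows essentially the same route as the paper: invoke Lemma \ref{q-binomial_lemma} to bound the degree ($d\le n^3$) and height ($\log h = O(n^2)$ via Hadamard) of the determinant polynomial, then apply an effective Lindemann--Weierstrass transcendence measure to lower-bound $|D(e^{2\pi\iota\theta})|$, and finally check that the resulting precision suffices for the matrix entries. The paper cites the specific form $|D(e^{2\pi\iota\theta})| \ge 2^{-c_\theta d^2 \log h}$ (via Mahler's method, \cite{Mah32,Ser98}), which gives the $n^8$ directly as $(n^3)^2 \cdot n^2$; your informal accounting lands in the same place.
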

\begin{proof}
	Consider nonnegative integers $r_1<r_2<\ldots<r_m$ and $c_1<c_2<\ldots<c_m$ with $n\geq r_i \geq c_i$ for all $i\in[m]$. By Lemma \ref{q-binomial_lemma}, the determinant $$\det\left(\left\{\qbinom{r_i}{c_j}_{e^{2\pi \iota \theta}}\mid i, j \in [m]\right\}\right)$$ is a monic integer polynomial $$x^{d}+ \sum_{k<d}w_kx^k \in\Z[x]$$ evaluated at $x=e^{2\pi \iota \theta}$ where $d:=\sum_{i=1}^{m}c_i(r_i-c_i) \le n^3$, $w_k \geq 0$ and $$\sum_{k<d}w_k = \det\left(\left\{\binom{r_i}{c_j} \mid i, j \in [m] \right\}\right)-1$$. \\
	
	\noindent By an effective Lindemann-Weierstrass theorem \cite[\S~1]{Ser98} using  Mahler's method \cite{Mah32}, an integer polynomial of bounded degree and height, when evaluated at $e^{2\pi \iota \theta}$, is bounded away from zero by $$\left|e^{2\pi \iota \theta d}+ \sum_{k<d}w_ke^{2\pi \iota \theta k}\right| \geq 2^{-c_\theta d^2\log(h)},$$ where $h \leq \det\left(\left\{\binom{r_i}{c_j} \mid i, j \in [m]\right\}\right)-1$ is the naive height of the polynomial and $c_\theta$ is an absolute positive constant depending only on $\theta$. Note $h = 2^{O(n^2)}$ by Hadamard's identity.  In summary, $\Theta(n^8)$ bits of precision suffice to ensure the determinant does not vanish. It is sufficient to have a precision of the same order for the entries of the matrix, to ensure the determinant does not vanish.
\end{proof}

\noindent The high degree in the polynomial dependence $n^8$ in the bit precision is an artefact of known effective Lindemann-Weierstrass bounds being much weaker than expected. This high degree dependence is not a big concern. When wrapped in the coding theory machinery \S~\ref{alphabet_reduction_section}, we get binary tree codes of constant distance and polylogarithmic alphabet size (see Remark \ref{rmk_larger_alphabet}).

\subsection{Tree codes and the unit circle}\label{unit_circle_subsection}
The distribution of points  $\{i\theta \mod 1 \mid i\in[0,n)\}$ is structured,  satisfying the three gap theorem. This structure is an obstruction to them approaching the uniform distribution, perhaps a desired feature for constructing tree codes amenable to decoding by convex optimization. Weyl \cite{wey16} showed for any irrational algebraic real $\theta$ that the set of integer square multiples $\{i^2\theta \mod 1 \mid i\in[0,n)\}$ converges to the uniform distribution. Rudnick, Sarnak and Zaherescu \cite{RSZ01} related the convergence to the Diaphantine approximability of $\theta$. Assuming the ABC conjecture, they proved a strong convergence theorem. Our hope is that this pseudorandomness can be leveraged to prove the online RIP that is sought in the convex optimization decoding formulation of \S~\ref{convex_decoding_section}.\\

\noindent With the points  $\{i\theta \mod 1\mid i \in [0,n)\}$, associate the set of points $\{x_i:=e^{2\pi\iota i\theta}\mid i\in [0,n)\}$ on the unit circle. For nonnegative integers $r,s$, define
\[ \xbinom{r}{s}_\theta :=  \begin{cases} 
\prod_{k=0}^{s-1}\frac{x_r-x_{s-k-1}}{x_s-x_{s-k-1}} & , r \geq s \\
\ \ \ \ 0 & , r < s 
\end{cases}
\]     
and consider the evaluation map $$z_i  = \sum_{j=0}^{i}g_j\xbinom{i}{j}_{\theta} , i \in [0,n).$$  We refrain from giving an explicit inversion formula, but the evaluation map and its inverse are both online, easily seen by the fact that the inverse of a lower triangular matrix is lower triangular. For nonnegative integers $r_1<r_2,\ldots<r_m$ and $c_1<c_2<\ldots<c_m$ with $r_i \geq c_i$ for all $i \in [m]$, the square sub-matrix  $\{\xbinom{r_i}{c_j}_q\mid i,j\in[m]\}$ of the evaluation map is realised as the path matrix of the following graph.

\begin{tikzpicture}
\draw[fill] (0,8) circle (1pt);
\draw[dashed,->-](0,7)--(0,8);
\draw[->-] (0,7) -- (0,8) node[midway,sloped,left,rotate=270] {\textcolor{red}{$x_0$}};
\node at (0, 8.3)  {$0$};

\foreach \y in {1,2,3,4,5,6,7,8}
\foreach \x in {0,...,\y}
\draw[fill] (\x,8-\y) circle (1pt) coordinate (m,\x,\y);

\foreach \y in {1,2,3}
\foreach \x in {0,...,\y}
\draw[->-](\x,7-\y) -- (\x,8-\y);

\foreach \x in {0,1,2,3,4}
\draw[dashed,->-](\x,3)--(\x,4);

\foreach \y in {5,6}
\foreach \x in {0,...,\y}
\draw[->-](\x,7-\y) -- (\x,8-\y);

\foreach \x in {0,1,2,3,4,5,6,7}
\draw[dashed,->-](\x,0)--(\x,1);

\foreach \y in {8}
\foreach \x in {0,...,\y}
\draw[->-](\x,7-\y) -- (\x,8-\y);

\foreach \y in {1,2,3,4,5,6,7,8}
\foreach \x in {1,...,\y}
\draw[->-](\x-1,8-\y) -- (\x,8-\y);

\draw[->-] (0,-1) -- (0,0) node[midway,sloped,left,rotate=270] {\textcolor{red}{$x_0$}};

\draw[->-] (1,-1) -- (1,0) node[midway,sloped,left,rotate=270] {\textcolor{red}{$x_1$}};

\draw[->-] (2,-1) -- (2,0) node[midway,sloped,left,rotate=270] {\textcolor{red}{$x_2$}};

\draw[->-] (0,6) -- (0,7) node[midway,sloped,left,rotate=270] {\textcolor{red}{$x_0$}};

\draw[->-] (1,6) -- (1,7) node[midway,sloped,left,rotate=270] {\textcolor{red}{$x_1$}};

\draw[->-] (0,5) -- (0,6) node[midway,sloped,left,rotate=270] {\textcolor{red}{$x_0$}};

\draw[->-] (1,5) -- (1,6) node[midway,sloped,left,rotate=270] {\textcolor{red}{$x_1$}};

\draw[->-] (2,5) -- (2,6) node[midway,sloped,left,rotate=270] {\textcolor{red}{$x_2$}};

\draw[->-] (5,-1) -- (5,0) node[midway,sloped,left,rotate=270] {\textcolor{red}{$x_{c_j}$}};

\draw[->-] (8,-1) -- (8,0) node[midway,sloped,left,rotate=270] {\textcolor{red}{$x_{r_i}$}};

\node at (-0.5, 7)  {$1$}; 
\node at (-0.5, 6)  {$2$}; 
\node at (-0.5, 5)  {$r_1$}; 
\node at (-0.5, 4)  {$r_2$}; 
\node at (-0.5, 0)  {$r_i$}; 

\node at (1.5,7) {$1$}; 
\node at (2.5,6) {$c_1$}; 
\node at (3.5,5) {$c_2$}; 
\node at (4.5,4) {$4$}; 
\node at (5.5,3) {$c_j$};

\node at (10,1.5) {$r_i$};
\node at (11.5,4) {$.$};
\node at (13.2,4.2) {$c_j$};
\node at (13,2.5) {$.$};
\draw[->-](11.5,4) -- (13,4) node[midway,above]{\textcolor{red}{$1$}};
\draw[->-](13,2.5) -- (13,4) node[midway,sloped,left,rotate=270]{\textcolor{red}{$x_{c_j}$}};
\draw[dashed,->-](10.2,1.7) to[bend left] (11.5,4);
\draw[dashed,->-](10.2,1.7) to[bend right] (13,2.5);

\node at (9.8,3.5) {$P_{r_i-1,c_j-1}$};

\node at (12.4,1.4) {$P_{r_i-1,c_j}$};

\node at (10.5,5.5) {Sum of $r_i \ra c_j$ path weights, $P_{r_i,c_j} = P_{r_i-1,c_j-1}+x_{c_j}P_{r_i-1,c_j}$};

\end{tikzpicture}\\

\noindent By the Lindstr\"om-Gessel-Viennot Lemma, the determinant is a polynomial in $e^{2\pi \iota \theta}$:	$$\det\left(\left\{\xbinom{r_i}{c_j}_q \mid i, j \in [m]\right\}\right) = e^{2\pi \iota \theta d}+ \sum_{k<d}w_k e^{2\pi \iota \theta k}$$ where $d:=\sum_{i=1}^{m}c^2_i(r^2_i-c^2_i)$, $w_k \geq 0$ and $\sum_{k<d}w_k = \det\left(\left\{\binom{r_i}{c_j}\mid i, j \in [m]\right\}\right)-1$. Again, by the Lindemann-Weierstrass lemma, $e^{2\pi \iota \theta}$ is transcendental and the determinant does not vanish.  Hence the tree code $$TC^{\theta^2}:\Z^n \longrightarrow (\Z\times \Z[e^{2\pi\iota \theta}])^n \ \ \ \ \ \ \ \ \ \ \ \ \ \ $$
$$(z_0,z_1,\ldots,z_{n-1})\longmapsto ((z_0,g_0),(z_1,g_1),\ldots,(z_{n-1},g_{n-1}))$$ has distance $1/2$. As in the proof of theorem \ref{sunflower_theorem}, an effective Lindemann-Weierstrass theorem implies that the nonvanishing of the determinant (and hence the distance property) holds even when the precision scales polynomially as $\Theta(n^{11})$. We may hence claim the following theorem.

\begin{mythm}\label{sunflower_theorem2}
	Fix an irrational algebraic real number $\theta$. For every positive integer $n$, the tree code $TC^{\theta^2}:\Z^n \longrightarrow (\Z\times \Z[e^{2\pi \iota \theta}])^n$ with coefficients $g_0,g_1,\ldots,g_n$ represented with $\Theta(n^{11})$ bit precision has distance $1/2$. 
\end{mythm}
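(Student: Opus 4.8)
The plan is to mirror the proof of Theorem~\ref{sunflower_theorem}, working in two layers. First I would reduce the distance-$1/2$ assertion to a nonvanishing-of-determinants statement and prove that nonvanishing exactly, using the Lindstr\"om--Gessel--Viennot realization from the paragraph preceding the theorem together with the transcendence of $e^{2\pi\iota\theta}$. Then I would make the nonvanishing quantitative via an effective Lindemann--Weierstrass theorem, so as to conclude that approximating $e^{2\pi\iota\theta}$ to $\Theta(n^{11})$ bits --- hence representing each coefficient $g_j\in\Z[e^{2\pi\iota\theta}]$ to that precision --- cannot make any of the relevant determinants vanish, which is exactly what is needed for the distance property to survive truncation to finite precision.

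For the reduction, note as in \S\ref{section_cyclotomy} that $TC^{\theta^2}$ is linear, so it has distance $1/2$ if and only if the evaluation map $z_i=\sum_j g_j\xbinom{i}{j}_\theta$ satisfies additive uncertainty, which is equivalent (as for $TC_q$ in \S\ref{section_cyclotomy}) to $\det\!\left(\left\{\xbinom{r_i}{c_j}_\theta\mid i,j\in[m]\right\}\right)\neq 0$ for every $r_1<\dots<r_m$ and $c_1<\dots<c_m$ with $c_i\le r_i<n$. The displayed lattice preceding the theorem realizes this submatrix as a path matrix, and the Lindstr\"om--Gessel--Viennot Lemma (Lemma~\ref{gessel_viennot}) identifies its determinant with a monic integer polynomial $P(x)=x^d+\sum_{k<d}w_k x^k\in\Z[x]$ evaluated at $x=e^{2\pi\iota\theta}$, with $w_k\ge 0$, degree $d=\sum_{i=1}^m c_i^2(r_i^2-c_i^2)$ polynomial in $n$, and coefficient sum $\sum_k w_k=\det(\{\binom{r_i}{c_j}\})-1\le 2^{n^2}n^{n/2}$ by Hadamard's inequality, so the naive height of $P$ satisfies $\log h=O(n^2)$. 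Since $\theta$ is algebraic irrational, the Lindemann--Weierstrass theorem makes $e^{2\pi\iota\theta}$ transcendental, so $P(e^{2\pi\iota\theta})\neq 0$; this is the infinite-precision distance-$1/2$ statement (and, as in Lemma~\ref{sunflower_lemma}, it does not need the ABC conjecture).

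To pass to finite precision, replace $e^{2\pi\iota\theta}$ by a point $y'$ on the unit circle with $|y'-e^{2\pi\iota\theta}|\le 2^{-p}$ and recompute every entry $\xbinom{r_i}{c_j}_\theta$, hence every coefficient $g_j$, from $y'$. An effective Lindemann--Weierstrass theorem \cite{Ser98} via Mahler's method \cite{Mah32}, exactly as in Theorem~\ref{sunflower_theorem}, gives
\[
\left|e^{2\pi\iota\theta d}+\sum_{k<d}w_k e^{2\pi\iota\theta k}\right|\ge 2^{-c_\theta d^2\log h},
\]
where $c_\theta>0$ depends only on $\theta$; with $d=\mathrm{poly}(n)$ and $\log h=O(n^2)$ this lower bound is $2^{-O(n^{11})}$ after tracking the exponents. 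On the other side, perturbing $e^{2\pi\iota\theta}$ to $y'$ changes the computed determinant by at most $2^{O(n^2)}\cdot 2^{-p}$ when one evaluates the exact integer polynomial $P$ at $y'$ (since $\sup_{|x|\le 1}|P'(x)|\le d(1+\sum_k w_k)=2^{O(n^2)}$), and more generally by at most $2^{O(m\log E)}\cdot 2^{-p}$ when one propagates the error through the $m\times m$ determinant on entries of magnitude $\le E$; here $E\le 2^{O(n\log n)}$ because the points on the unit circle used by $TC^{\theta^2}$ are distinct (as $\theta$ is irrational) and $1/\mathrm{poly}(n)$-separated, the chord $|x_i-x_j|$ being at least a constant times $\|(i^2-j^2)\theta\|\ge n^{-O_\theta(1)}$ by Liouville's (or Roth's) theorem since $|i^2-j^2|<n^2$. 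Choosing $p=\Theta(n^{11})$ makes this perturbation smaller than $2^{-c_\theta d^2\log h}$, so every relevant determinant is still nonzero, the finite-precision $TC^{\theta^2}$ still satisfies additive uncertainty, and therefore has distance $1/2$.

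The conceptual step --- Lindstr\"om--Gessel--Viennot plus transcendence --- is the same as in Lemma~\ref{sunflower_lemma}, so the main obstacle is entirely quantitative. I expect two places to require care: first, carrying the polynomial degree $d=\sum_i c_i^2(r_i^2-c_i^2)$ and the height bound $\log h=O(n^2)$ through the effective Lindemann--Weierstrass estimate so that they multiply out to precisely the claimed $\Theta(n^{11})$ (this is where the extra factors of $r_i$ and $c_i$ in $d$ raise the exponent relative to the $q$-binomial/sunflower case of Theorem~\ref{sunflower_theorem}); and second, bounding the magnitudes of the entries $\xbinom{r_i}{c_j}_\theta$, which are products of quotients of differences of nearly-coincident points on the unit circle --- this is what forces the lower bound on the minimal gap of $\{i^2\theta\bmod 1\}$, and the key point is that this gap bound is available unconditionally from Liouville/Roth (no ABC conjecture), which is exactly what keeps the required precision polynomial.
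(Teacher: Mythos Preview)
Your proposal is correct and follows essentially the same route as the paper: reduce distance $1/2$ to nonvanishing of the submatrix determinants, realize those determinants via Lindstr\"om--Gessel--Viennot as monic integer polynomials in $e^{2\pi\iota\theta}$, invoke Lindemann--Weierstrass for exact nonvanishing, and then the effective Lindemann--Weierstrass bound $|P(e^{2\pi\iota\theta})|\ge 2^{-c_\theta d^2\log h}$ to pass to finite precision, with the larger exponent $n^{11}$ (versus $n^8$ in Theorem~\ref{sunflower_theorem}) coming from the larger degree $d=\sum_i c_i^2(r_i^2-c_i^2)$ here.

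The one place you go beyond the paper's (very terse) argument is your second ``place requiring care'': bounding the magnitudes of the entries $\xbinom{r_i}{c_j}_\theta$ themselves, which unlike the $q$-binomial entries in Theorem~\ref{sunflower_theorem} carry denominators $\prod_k(x_s-x_{s-k-1})$. The paper simply asserts that entrywise precision of the same order suffices; your observation that Liouville's inequality for the algebraic $\theta$ gives $\|(i^2-j^2)\theta\|\ge n^{-O_\theta(1)}$, hence $|x_i-x_j|\ge n^{-O_\theta(1)}$, is the right way to justify that step and is indeed unconditional (no ABC needed---the ABC-conditional input in \S\ref{unit_circle_subsection} concerns equidistribution, not the separation bound you use). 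This is a detail the paper glosses over rather than a genuinely different method.
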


\section{Acknowledgements}\label{acknowledgements}
We would like to thank Valerie Berti, Bernhard Haeupler, Antoine Joux and Patrice Philippon for valuable discussions.

\bibliography{general_bib}{}
\bibliographystyle{alpha}

\end{document}